\documentclass[12pt]{amsart}

\usepackage{amssymb, amsmath, hyperref}
\pdfoutput=1


\def\<{\langle}
\def\>{\rangle}
\def\Z{{\mathbb Z}}

\def\R{{\mathbb R}}

\def\cC{{\mathcal C}}
\def\B{{\mathcal B}}
\def\E{{\mathcal E}}
\def\D{{\mathcal D}}

\def\W{{\mathcal W}}
\def\J{{\mathcal J}}
\def\wD{{\widetilde {\mathcal D}}}
\def\C{\mathbb{C}}

\DeclareMathOperator{\tr}{tr}
\DeclareMathOperator{\diag}{diag}



%

\newtheorem{theorem}{Theorem} 
\newtheorem{lemma}[theorem]{Lemma} 
\newtheorem{corollary}[theorem]{Corollary}
\newtheorem{proposition}[theorem]{Proposition}
\theoremstyle{definition}
\newtheorem*{notation}{Notation}
\newtheorem{definition}[theorem]{Definition}
\theoremstyle{remark}
\newtheorem{example}[theorem]{Example}
\newtheorem{remark}[theorem]{Remark}

\begin{document}

\title[CP map from basis to dual basis]{Complete positivity of the  map from a basis to its dual basis}

\author[V.~I.~Paulsen]{Vern I.~Paulsen}
\address{Department of Mathematics, University of Houston,
Houston, TX 77204-3476}
\email{vern@math.uh.edu}
\author[Fred~Shultz]{Fred Shultz}
\address{Department of Mathematics, Wellesley College, Wellesley, MA 02481}
\email{fshultz@wellesley.edu}

\thanks{The research of the first author was supported by NSF grant 1101231}
\subjclass[2010]{Primary 46N50; Secondary 47L07, 47L07}

\begin{abstract} The dual of a matrix ordered space has a natural matrix ordering that makes the dual space matrix ordered as well. The purpose of these notes is to give a condition that describes when the linear map taking a basis of $M_n$ to its dual basis is a complete order isomorphism.
We  exhibit ``natural'' orthonormal bases for $M_n$  such that this
map is an order isomorphism, but not a complete order
isomorphism. Included among such bases is the Pauli basis. Our results generalize the Choi matrix by giving conditions under which the role of the standard basis $\{E_{ij}\}$ can be replaced by other bases.
\end{abstract}

\maketitle
\date{\today}

Given a vector space $V$ there is no ``natural'' linear isomorphism between $V$ and the dual space $V^d,$ but each time we fix a basis $\B= \{v_i: i \in I \}$ for $V$ there is a {\em dual basis} $\widetilde{\B}= \{ \delta_i: i \in I\}$ for $V^d$ satisfying 
\[\delta_i(v_j) = \begin{cases} 0, & i \ne j\\ 1, & i=j \end{cases}\]
and this allows us to define a (basis dependent) linear isomorphism
between $V$ and $V^d.$

 \begin{definition} If $\B$ is a  basis of $M_n$, the linear map from $M_n$ to $M_n^d$ taking each member of $\B$ to the corresponding member of the dual basis is denoted by $\D_\B$, and is called the \emph{duality map}. We let $\Gamma_\B= \D_\B^{-1}: M_n^d \to M_n$ denote the inverse of this map.

\end{definition}
   
Note that if $f \in M_n^d,$ and $\B$ is a basis of $M_n$, then $\Gamma_\B(f) = \sum_{b\in \B} f(b) b$.
In particular, when $V= M_n$ (the space of $n \times n$ complex matrices), and we let $\E= \{ E_{i,j}: 1 \le i,j \le n \}$ denote the standard matrix units, then the map $\Gamma_\E: M_n^d \to M_n$ satisfies
\[ \Gamma_\E(f) = \sum_{i,j=1}^n f(E_{i,j}) E_{i,j}. \]
\begin{definition}
If $f \in M_n^d$, there is a unique matrix $D$ such that $f(X) = \tr(DX)$ for all $X \in M_n$, and we call this matrix the {\it density matrix} for $f$, with no requirement of positivity for $f$ or $D$.
\end{definition}

  Thus $\Gamma_\E$  is  just the map that identifies a functional $f$ with the transpose of its density matrix:
\begin{equation}\label{eq:density matrix}
f(X) = \tr (\Gamma_\E(f)^t X) \text{ for all $X \in M_n$.}
\end{equation}

This note is motivated by the following result of Paulsen-Todorov-Tomforde \cite[Thm. 6.2]{PTT}, which we will see later is a restatement of the Choi-Jamio\l kowski correspondences \cite{Choi,Jamiol}.
 
  \begin{theorem}\label{PTT} The map $\D_\E:M_n \to M_n^d$ is a complete order isomorphism between these matrix ordered spaces.  
  \end{theorem}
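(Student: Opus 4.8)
The plan is to reduce Theorem~\ref{PTT} to Choi's theorem \cite{Choi} characterizing completely positive maps between full matrix algebras, by recognizing that, under the standard identifications, $\D_\E$ \emph{is} the Choi correspondence.

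First I would record the two ingredients. On the dual side, the matrix order on $M_n^d$ is by definition the one for which $F=(f_{ij})\in M_k(M_n^d)$ is positive exactly when the induced linear map $\Phi_F\colon M_n\to M_k$, $\Phi_F(Y)=(f_{ij}(Y))_{i,j}$, is completely positive. On the matrix-algebra side, Choi's theorem states that a linear map $\Phi\colon M_n\to M_k$ is completely positive if and only if its Choi matrix $C_\Phi=\sum_{p,q=1}^n E_{pq}\otimes\Phi(E_{pq})\in M_n\otimes M_k$ is positive semidefinite. The theorem will follow once these two positivity conditions are matched up through $\D_\E$.

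Next, fix $k$ and a generic element $(X_{ij})\in M_k(M_n)$, viewed as $\sum_{i,j}E_{ij}\otimes X_{ij}\in M_k\otimes M_n$. From \eqref{eq:density matrix} one reads off that $\D_\E(X)(Y)=\tr(X^tY)$, so the map $\Phi$ attached to $F=(\D_\E(X_{ij}))_{i,j}$ satisfies $\Phi(E_{pq})_{ij}=\tr(X_{ij}^tE_{pq})=(X_{ij})_{pq}$, that is, $\Phi(E_{pq})=\sum_{i,j}(X_{ij})_{pq}E_{ij}$. Substituting into the Choi matrix gives
\[
C_\Phi=\sum_{i,j,p,q}(X_{ij})_{pq}\,E_{pq}\otimes E_{ij},
\]
which is precisely the image of $(X_{ij})=\sum_{i,j,p,q}(X_{ij})_{pq}\,E_{ij}\otimes E_{pq}$ under the flip $*$-isomorphism $M_k\otimes M_n\to M_n\otimes M_k$; in particular $C_\Phi$ and $(X_{ij})$ are unitarily equivalent.

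Finally I would chain the equivalences: $(X_{ij})\ge 0$ in $M_k\otimes M_n$ if and only if $C_\Phi\ge 0$ in $M_n\otimes M_k$ (the flip is a $*$-isomorphism), if and only if $\Phi$ is completely positive (Choi), if and only if $F=(\D_\E(X_{ij}))_{i,j}\ge 0$ in $M_k(M_n^d)$ (definition of the dual order). Since $\D_\E$ sends a basis to a basis it is a linear isomorphism, so its $k$-th amplification $\D_\E^{(k)}$ is a bijection carrying the positive cone of $M_k(M_n)$ onto the positive cone of $M_k(M_n^d)$; as $k$ is arbitrary, $\D_\E$ is a complete order isomorphism. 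The only genuine work is the index bookkeeping in the third paragraph --- keeping track of the transpose and of the order of the tensor legs --- together with a check that the conventions for the Choi matrix and for the dual cone used here agree with those quoted above; since transposition and the flip are $*$-isomorphisms, any such mismatch is harmless for positivity.
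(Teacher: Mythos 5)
Your proposal is correct and is essentially the paper's own argument: both identify the $k$-th amplification of the duality map with the Choi matrix of the associated map $\Phi\colon M_n\to M_k$ up to the flip $*$-isomorphism $A\otimes B\mapsto B\otimes A$, and then invoke Choi's theorem together with the definition of the dual matrix order. The only cosmetic difference is that you work with $\D_\E$ in the forward direction while the paper runs the same computation for $\Gamma_\E=\D_\E^{-1}$; one small caution is your closing remark that any convention mismatch is ``harmless for positivity'' --- a mismatch producing a partial transpose would not be harmless --- but your explicit index computation uses the correct conventions, so the proof stands.
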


In this paper, we will show that this theorem is very dependent on the choice of basis. In fact, we will show that there exist orthonormal bases for $M_n$ such that the inverse duality map does not even send positive functionals to positive matrices. Even more intriguing, we will show that there are ``natural'' orthonormal bases for $M_n$ such that the inverse duality map does send positive functionals to positive matrices, yet does not send completely positive maps to positive block matrices. These results can be interpreted as giving some new Choi-Jamio\l kowski type results.

We believe that such bases might be useful as entanglement witnesses.  (We will comment further on this after Corollary \ref{genlchoi}).

Before proceeding it will be necessary to establish some notation. 
 Recall that when we say that a vector space $V$ is {\em matrix ordered} we mean that for each natural number $p,$ we have specified a cone $\cC_p$ in the vector space of $p \times p$ matrices over $V,$ $M_p(V),$ which we identify as the {\em positive elements} in $M_p(V)$, and that these cones must satisfy certain natural axioms, such as if $A \in \cC_p$ and $B \in \cC_q,$ then $A \oplus B \in \cC_{p+q}.$ We also require that if $X=(x_{i,j})$ is a $p \times q$ matrix of scalars and $A=(v_{i,j}) \in \cC_p,$ then
\[ X^*AX= (\sum_{k,l=1}^p \overline{x_{k,i}}v_{k,l} x_{l,j}) \in \cC_q.\]
When there is no ambiguity we simply write $\cC_p=M_p(V)^+.$ See \cite[Chpt. 13]{Paulsen} for more background on matrix ordered spaces.

Matrix ordered spaces are the natural setting for studying {\em completely positive maps.} Indeed, given matrix ordered spaces $V, W$ we say that a linear map $\Phi:V \to W$ is completely positive provided that for each $p,$ $(v_{i,j}) \in M_p(V)^+$ implies that $(\Phi(v_{i,j})) \in M_p(W)^+.$  (We will denote the map that takes $(v_{i,j})$  to $(\Phi(v_{i,j}))$ by $\Phi^{(p)}$, so that $\Phi$ is completely positive iff every map $\Phi^{(p)}$ is a positive map.) We say that $\Phi$ is a complete order isomorphism provided that $\Phi$ is invertible and that $\Phi$ and $\Phi^{-1}$ are both completely positive.

The most frequently encountered example of a matrix ordered space is $L(H),$ the bounded linear operators on a Hilbert space $H.$ We define the matrix ordering by identifying $M_p(L(H))= L(H \oplus \cdots \oplus H),$ the bounded linear operators on the direct sum of $p$ copies of the Hilbert space, and declaring $(A_{i,j}) \in M_p(L(H))^+$ exactly when it defines a positive operator on the Hilbert space $H \oplus \cdots \oplus H.$

 When $H= \C^n,$ the standard $n$-dimensional Hilbert space, we write
 $e_1, \ldots, e_n$ for the standard basis of $\C^n$, and write $\<
 \cdot\,, \cdot \>$ for the inner product on $\C^n$. We identify
 $L(\C^n)$ with $ M_n$, the set of $n \times n$ matrices with entries in
 $\C.$ Note that identifying $M_p(M_n),$ the $p \times p$ block
 matrices with entries from $M_n,$ with $M_{pn}$ yields the same cone
 of positive matrices as when we identify $M_p(M_n)$ with the linear
 maps on the direct sum of $p$ copies of $\C^n,$ $L(\C^n \oplus \cdots
 \oplus \C^n).$

Finally,
given a matrix ordered space $V$ there is a natural way to define a matrix   
ordering on the dual space $V^d.$ To do this we declare that a matrix of functionals $(f_{i,j}) \in M_p(V^d)$  belongs to $M_p(V^d)^+$ if and only if the linear map
$\Phi: V \to M_p$ given by $\Phi(v) = (f_{i,j}(v))$ is completely positive.

In this paper we will be concerned with examining various natural bases $\B$ for $M_n$ and determining whether or not the duality map $D_\B$ is a complete order isomorphism. We will see that there exist bases for $M_n$ such that $D_\B$ is positive but not completely positive. 

Since our results rely on Theorem~\ref{PTT}, we present a new proof here, that
is somewhat simpler than the proof that appeared in \cite{PTT}.

\medspace

{\em Proof of Theorem \ref{PTT}.} Rather than proving that $\D_\E$ is
a complete order isomorphism, we prove, equivalently, that $\Gamma_\E
= \D_\E^{-1}:M_n^d \to M_n$ is a complete order isomorphism. We have already seen
that $\Gamma_\E$ sends functionals to the transpose of their density matrices. Since a
functional is positive if and only if its density matrix is positive, and the transpose map is an order isomorphism,
we see that $\Gamma_\E$ is an order isomorphism.

Now let $(f_{k,l}) \in M_p(M_n^d)$ and consider the map $\Phi:
M_n \to M_p$ defined by $\Phi(X)= (f_{k,l}(X))= \sum_{k,l=1}^p f_{k,l}(X)E_{k,l}.$ We must show that
$\Phi$ is completely
positive if and only if $\Gamma_\E^{(p)}((f_{k,l})) \in M_p(M_n)^+.$


We have  
\begin{align*} \Gamma_\E^{(p)}((f_{k,l})) &= \big(\Gamma_\E(f_{k,l})\big)_{k,l=1}^p = \big( \sum_{i,j=1}^n
f_{k,l}(E_{i,j})E_{i,j}) \big)_{k,l=1}^p \cr &= \sum_{k,l=1}^p E_{k,l}\otimes [\sum_{i,j=1}^n
f_{k,l}(E_{i,j})E_{i,j}]  
= \sum_{i,j=1}^n 
[\sum_{k,l=1}^p f_{k,l}(E_{i,j})E_{k,l}]\otimes E_{i,j} \cr &= \sum_{i,j=1}^n\Phi(E_{i,j})\otimes E_{i,j}. \end{align*}
Since the  map that takes $A \otimes B$ to $B \otimes A$ extends to a *-isomorphism of $M_p \otimes M_n$ onto $M_n \otimes M_p$, the last expression is positive iff the matrix 
\begin{equation}\label{Choi} C_\Phi =  \sum_{i,j=1}^nE_{i,j}\otimes \Phi(E_{i,j})
\end{equation}
is positive. 
But this last matrix is the Choi matrix and by
Choi's theorem~\cite{Choi} the map $\Phi$ is completely positive if and
only if this block matrix is positive.
Thus $(f_{k,l}) \in M_p(M_n)^+$ if and only if $(\Gamma_\E(f_{k,l}))
\in M_p(M_n)^+$ and we have shown that $\Gamma_\E$ is a complete order
isomorphism. This completes the proof of Theorem~1. \hfill$\square$

\medspace
\medspace

 A map $\Psi:M_n \to M_n$ is called a {\em co-positive order isomorphism} provided that its composition $t \circ \Psi$ with the transpose map $t$ on $M_n$ is a complete order isomorphism. 
 
\begin{corollary}\label{densitymatrixcor} The linear map from $M_n^d$ to $M_n$ that takes a functional to its density matrix is a co-positive order isomorphism.
\end{corollary}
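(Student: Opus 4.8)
The plan is to reduce the corollary to Theorem~\ref{PTT} (equivalently, to the fact, established in its proof, that $\Gamma_\E$ is a complete order isomorphism) by recognizing the density matrix map as $\Gamma_\E$ composed with a transpose. Write $\Psi:M_n^d\to M_n$ for the map sending a functional $f$ to its density matrix $D_f$, characterized by $f(X)=\tr(D_f X)$ for all $X\in M_n$. Comparing this with \eqref{eq:density matrix}, which says $f(X)=\tr(\Gamma_\E(f)^t X)$ for all $X$, and using the uniqueness of the density matrix, I would first record the identity $\Psi(f)=\Gamma_\E(f)^t$ for every $f\in M_n^d$; equivalently, $t\circ\Psi=\Gamma_\E$, where $t$ is the transpose map on $M_n$. (This is already remarked in the text right after the definition of density matrix, so it is essentially a matter of pointing at it.)

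Next, I would invoke Theorem~\ref{PTT}: since $\D_\E$ is a complete order isomorphism, so is its inverse $\Gamma_\E=\D_\E^{-1}$ — indeed, the new proof of Theorem~\ref{PTT} given above shows $\Gamma_\E$ is a complete order isomorphism directly. Combining this with the identity of the previous paragraph, $t\circ\Psi=\Gamma_\E$ is a complete order isomorphism, which is precisely the defining condition for $\Psi$ to be a co-positive order isomorphism (applying the definition verbatim to the map $\Psi$ with domain the matrix ordered space $M_n^d$ in place of $M_n$). This completes the argument.

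I do not expect any genuine obstacle here; the statement is a direct corollary. The only points that require care are bookkeeping: checking that the transpose falls on the correct side, so that $t\circ\Psi$ (and not $\Psi\circ t$) equals $\Gamma_\E$; and resisting the tempting but false shortcut of declaring ``$t$ a complete order isomorphism'' — it is not, and that is exactly why the conclusion is a \emph{co-positive} order isomorphism rather than a complete order isomorphism. All the positivity content is supplied wholesale by Theorem~\ref{PTT} (via Choi's theorem); nothing further about $\Psi$ or $t$ individually is needed.
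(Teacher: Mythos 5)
Your proposal is correct and is essentially the paper's own argument: the paper's one-line proof simply observes that the density matrix map equals $t\circ\Gamma_\E$ (equivalently, its composition with $t$ is $\Gamma_\E$, a complete order isomorphism by Theorem~\ref{PTT}), which is exactly your reduction. Your added care about which side the transpose lands on, and about $t$ itself not being completely positive, is sound bookkeeping but introduces nothing beyond the paper's reasoning.
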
 

\begin{proof}
As remarked in connection with equation \eqref{eq:density matrix}, the map that takes a functional to its density matrix is $t \circ \Gamma_\E$.
\end{proof}

Now that we have a complete order isomorphism $\D_\E$ between $M_n$ and
$M_n^d,$ in order to determine whether or not other maps between $M_n$ and $M_n^d$
are complete order isomorphisms,
it will be convenient to work with a map $\wD_\B: M_n \to M_n$  instead of $\D_\B:M_n \to M_n^d$.

\begin{definition} Let $\B$ be a basis of $M_n$ and $\E$ the standard basis of matrix units.  Then we define $\wD_\B: M_n \to M_n$ by $\wD_\B = \Gamma_\E\circ \D_\B$.
\end{definition}

Note that since $\Gamma_\E$ is a complete order isomorphism, $\D_\B$
will be a complete order isomorphism if and only if $\wD_\B$ is a
complete order isomorphism.

Given a matrix $A \in M_n$ we let $A^t$ denote its transpose.  Recall that $\E = \{E_{ij}\}$ is an orthonormal basis for $M_n$ with respect to the Hilbert-Schmidt inner product, which we  denote by $\< A, B\>= \tr(AB^*)$, where $\tr:M_n \to \C$ denotes the unnormalized trace, $\tr(A) = \sum_{i+1}^n a_{i,i}$.

Using the  orthonormal basis $\E$ for $M_n$, we represent
elements in $L(M_n)$ as $n^2 \times n^2$ matrices. Given $L \in
L(M_n)$ we write $L^T$ to denote the transpose of the matrix for $L$
with respect to the basis $\E$, to distinguish this transpose from the
transpose on $M_n.$  


\begin{definition} Let $\B$ be a basis of $M_n$ and $\E$ the standard basis of matrix units. A  \emph{change of basis map}  is any linear map $C_\B$ in $ L(M_n)$ taking the set $\E$ to the set $\B$.  By slight abuse of notation, we write $C_\B^T$ for the unique linear map in $L(M_n)$ whose matrix in the standard basis $\E$ is the transpose of the matrix of $C_\B$. We define $M_\B = C_\B C_\B^T \in L(M_n)$.
\end{definition}

Since a linear map is uniquely determined by its values on a basis, we see that a change of basis map is uniquely determined up to re-orderings of the basis elements. Thus, in the setting of $M_n$ there will be $(n^2)!$ change of basis maps.  However, the map $M_\B$ is independent of the choice of change of basis map or matrix.  Indeed, fix one change of basis map $C_\B$. Then every change of basis map has the form $C_\B\circ P$, where $P \in M_n$ is a linear map which permutes the basis $\E$. Since $P$ sends the orthonormal basis $\E$ to itself, its matrix in that basis is orthogonal.  Thus $M_\B$ is unchanged if we replace  $C_\B$ by $C_\B P$.   

Fortunately, we will find that the results that we seek depend on $M_\B$ and are independent of the choice of change of basis map $C_\B$. In particular,  our conditions determining when the duality map $\D_\B$ is an order isomorphism or a complete order isomorphism will be expressed in terms of the map $M_\B$.

\begin{theorem} \label{main} If $\B$ is a basis of $M_n$, then   the duality map is given by
  $\D_\B= \D_\E \circ M_\B^{-1}$.
\end{theorem}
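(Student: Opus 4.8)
The plan is to verify the identity by evaluating both sides on the basis $\B$; since $\D_\B$ and $\D_\E\circ M_\B^{-1}$ are linear, agreement on a basis suffices. The one preliminary observation is a usable formula for $\D_\E$: combining $\Gamma_\E=\D_\E^{-1}$ with \eqref{eq:density matrix} gives $\D_\E(A)(X)=\tr(A^tX)$ for all $A,X\in M_n$. It is convenient to write $\langle\langle A,B\rangle\rangle:=\tr(A^tB)$; this is a symmetric $\C$-bilinear form for which the standard basis is self-dual, $\langle\langle E_{ij},E_{kl}\rangle\rangle=\delta_{ik}\delta_{jl}$, and $\D_\E(A)=\langle\langle A,\,\cdot\,\rangle\rangle$. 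Also $M_\B=C_\B C_\B^T$ is invertible, being a product of invertible maps, so $M_\B^{-1}$ is defined.

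Next I would record the single fact linking the operator transpose to this pairing: for every $L\in L(M_n)$,
\[
\langle\langle LX,Y\rangle\rangle=\langle\langle X,L^TY\rangle\rangle\qquad\text{for all }X,Y\in M_n.
\]
This is immediate in $\E$-coordinates, where $\langle\langle\cdot,\cdot\rangle\rangle$ is the standard bilinear form $x^ty$ (by self-duality of $\E$), $L$ acts by its matrix, and $L^T$ by the transposed matrix. In particular $(C_\B^T)^{-1}=(C_\B^{-1})^T$, so $M_\B^{-1}=(C_\B^{-1})^TC_\B^{-1}$. Now fix one change of basis map $C_\B$ and index $\B=\{b_\alpha\}_{\alpha=1}^{n^2}$ so that $C_\B e_\alpha=b_\alpha$, hence $C_\B^{-1}b_\alpha=e_\alpha$, where $e_1,\dots,e_{n^2}$ is the correspondingly reindexed $\E$.

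The computation is then two lines: for any indices $\alpha,\delta$,
\[
\D_\E\big(M_\B^{-1}b_\alpha\big)(b_\delta)=\langle\langle (C_\B^{-1})^TC_\B^{-1}b_\alpha,\,b_\delta\rangle\rangle=\langle\langle C_\B^{-1}b_\alpha,\,C_\B^{-1}b_\delta\rangle\rangle=\langle\langle e_\alpha,e_\delta\rangle\rangle=\delta_{\alpha\delta}.
\]
Thus $\D_\E(M_\B^{-1}b_\alpha)$ is the functional that is $1$ on $b_\alpha$ and $0$ on every other $b_\delta$, i.e.\ the dual basis element of $b_\alpha$, which by definition is $\D_\B(b_\alpha)$. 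Hence $\D_\B$ and $\D_\E\circ M_\B^{-1}$ agree on $\B$, and therefore everywhere.

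The only thing requiring care is bookkeeping: one must keep the transpose $A^t$ on $M_n$ distinct from the operator transpose $L^T$ on $L(M_n)$ (defined relative to $\E$), and one must represent $\D_\E$ via the $\C$-bilinear form $\tr(A^tX)$ rather than the Hilbert--Schmidt inner product, precisely so that ``transpose of the matrix in the basis $\E$'' coincides with the adjoint for that pairing and no conjugates intrude. A purely coordinate-based route works equally well and may be what one writes down: using that $\Gamma_\E$ carries the dual basis of $\E$ back to $\E$, together with the fact that the dual basis of $\B$ has coordinate matrix $C_\B^{-1}$ relative to the dual basis of $\E$, one computes directly that the matrix of $\wD_\B=\Gamma_\E\circ\D_\B$ in the basis $\E$ equals $(C_\B C_\B^T)^{-1}$, the matrix of $M_\B^{-1}$, whence $\D_\B=\D_\E\circ\wD_\B=\D_\E\circ M_\B^{-1}$.
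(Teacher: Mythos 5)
Your proof is correct and is essentially the paper's argument: both reduce to verifying the identity on basis elements by evaluating the pairing $\tr(A^tX)$ that represents $\D_\E$, and both use that the $\E$-matrix transpose $C_\B^T$ is the adjoint for that pairing (the paper proves $M_\B\D_\E^{-1}=\D_\B^{-1}$ on the dual basis; you prove the rearranged $\D_\E M_\B^{-1}=\D_\B$ on $\B$). Your use of the bilinear form $\tr(A^tB)$ in place of the Hilbert--Schmidt inner product is a tidy bookkeeping improvement that avoids the complex-conjugation step the paper needs, but the substance is the same.
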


\begin{proof} Let $\B = \{X_1, \ldots, X_{n^2}\}$, and let  $\{\widehat X_1, \ldots, \widehat X_{n^2}\}\subset M_n^d$ be the dual basis. We write $\E = \{E_1, \ldots, E_{n^2}\}$.

 Define $Y_j = \D_\E^{-1}(\widehat X_j)= \Gamma_\E(\widehat X_j) \in M_n$. We are going to show
 \begin{equation}\label{eqkey} C_\B^T Y_j = E_j \text{ for all $j$}.
 \end{equation}   For each $i, j$, using \eqref{eq:density matrix}
 \begin{align}
 \<E_i, C_\B  ^*\overline{Y}_j \> &= \<C_\B  E_i, \overline{Y}_j\> = \<X_i, \overline{Y}_j\> 
 = \tr(X_i \overline{Y}_j^*\> = \tr (XY_j^t)\cr
  &= \tr(X_i (\Gamma_E(\widehat X_j))^t) 
  = \widehat X_j(X_i) = \delta_{ij}.
 \end{align}
 It follows that $C_\B  ^*\overline{Y}_j = E_j$. Taking conjugates gives \eqref{eqkey}.
 
 Now $$C_\B  C_\B^T D_\E^{-1}\widehat X_j =  C_\B  C_\B^T Y_j = C_\B  E_j = X_j = D_\B^{-1}\widehat X_j,$$ so by linearity $M_\B D_\E^{-1} = D_\B^{-1}$. Thus $D_\B = D_\E M_\B^{-1}$.

\end{proof}

\begin{notation} If $C \in M_n$, then $\Phi_C:M_n \to M_n$ is the completely positive map defined by $\Phi_C(X) = CXC^*$.
\end{notation}

Recall that a map $\Psi:M_n \to M_p$ is called {\em completely co-positive} if and only if $t \circ \Psi$ is completely positive, and a map $\Psi:M_n \to M_n$ is called a {\em co-positive order isomorphism} provided that its composition $t \circ \Psi$ with the transpose map $t$ on $M_n$ is a complete order isomorphism.   Here the order of composition with the transpose map doesn't matter as shown by the next result.

\begin{proposition} Let $\Phi:M_n \to M_p$ then $t \circ \Phi$ is completely positive if and only if $\Phi \circ t$ is completely positive.
\end{proposition}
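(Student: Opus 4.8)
The plan is to translate each side into positivity of a Choi matrix and then observe that the two Choi matrices are transposes of one another. Recall, as in Choi's theorem \cite{Choi} (used already in the proof of Theorem~\ref{PTT}), that a linear map $\Psi : M_n \to M_p$ is completely positive if and only if its Choi matrix $C_\Psi = \sum_{i,j=1}^n E_{i,j} \otimes \Psi(E_{i,j}) \in M_n \otimes M_p$ is positive. So it suffices to show that $C_{t\circ\Phi} \ge 0$ if and only if $C_{\Phi\circ t} \ge 0$, where in $t\circ\Phi$ the map $t$ is the transpose on $M_p$ and in $\Phi\circ t$ it is the transpose on $M_n$.

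First I would expand both Choi matrices in terms of the plain Choi matrix $C_\Phi = \sum_{i,j} E_{i,j}\otimes\Phi(E_{i,j})$. Straight from the definitions, $C_{t\circ\Phi} = \sum_{i,j} E_{i,j}\otimes\Phi(E_{i,j})^t = (\mathrm{id}\otimes t)(C_\Phi)$ is the partial transpose of $C_\Phi$ in its second tensor factor. Since $(\Phi\circ t)(E_{i,j}) = \Phi(E_{j,i})$, we get $C_{\Phi\circ t} = \sum_{i,j} E_{i,j}\otimes\Phi(E_{j,i})$, and after relabelling the summation indices $i\leftrightarrow j$ this becomes $\sum_{i,j} E_{j,i}\otimes\Phi(E_{i,j}) = (t\otimes\mathrm{id})(C_\Phi)$, the partial transpose of $C_\Phi$ in its first tensor factor.

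The key step is that the full transpose on $M_n\otimes M_p \cong M_{np}$ preserves positivity, is an involution, and equals $t\otimes t$. Applying $t\otimes t$ to $C_{t\circ\Phi} = (\mathrm{id}\otimes t)(C_\Phi)$ and using $t\circ t = \mathrm{id}$ on the second factor gives $(t\otimes t)(C_{t\circ\Phi}) = (t\otimes\mathrm{id})(C_\Phi) = C_{\Phi\circ t}$. Thus $C_{\Phi\circ t}$ is the transpose of $C_{t\circ\Phi}$, so the two are positive or not positive together, and Choi's theorem then gives that $t\circ\Phi$ is completely positive if and only if $\Phi\circ t$ is.

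I do not expect a real obstacle; the only things needing care are keeping track of which space each transpose acts on and the index relabelling in $C_{\Phi\circ t}$. If one wishes to avoid naming partial transposes, there is an equivalent packaging: a one-line Choi-matrix computation shows $C_{t\circ\Psi\circ t} = (C_\Psi)^t$ for every linear $\Psi:M_n\to M_p$ (with the transposes taken on $M_p$ and $M_n$ respectively), hence $\Psi$ is completely positive if and only if $t\circ\Psi\circ t$ is; applying this with $\Psi = t\circ\Phi$ and using $t\circ t = \mathrm{id}$ yields the proposition immediately.
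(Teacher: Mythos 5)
Your proof is correct and follows essentially the same route as the paper: both reduce the statement via Choi's theorem to comparing the two Choi matrices $\big(\Phi(E_{i,j})^t\big)$ and $\big(\Phi(E_{j,i})\big)$ and observe that these $np \times np$ block matrices are transposes of one another, hence simultaneously positive. Your packaging in terms of partial transposes and the identity $(t\otimes t)\circ(\mathrm{id}\otimes t) = t\otimes\mathrm{id}$ is just a more explicit rendering of the paper's one-line observation.
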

\begin{proof}
By Choi's result \cite{Choi}, $t \circ \Phi$ is completely positive if and only if
$\big( \Phi(E_{i,j})^t \big)$ is positive, while $\Phi \circ t$ is completely positive if and only if $\big( \Phi(E_{j,i}) \big)$ is positive. But these $np \times np$ block matrices are transposes of each other.
\end{proof}

We start with the following description of order automorphisms of $M_n$ and their partition into completely positive and completely co-positive maps.  It is a consequence of more general results of Kadison \cite{Kadison} relating isometries, Jordan isomorphisms, order isomorphisms, and *-isomorphisms of C*-algebras, specialized to  $M_n$ viewed as a C*-algebra.

\begin{lemma}\label{orderiso} Let $\Phi:M_n \to M_n$ be an order isomorphism.  Then there exists an invertible $C \in M_n$ such that either $\Phi = \Phi_C$ or $\Phi = t \circ \Phi_C$.  In the first case, $\Phi$ is a complete order isomorphism, and in the second case $\Phi$ is a co-positive order isomorphism.  If $n>1$, both cases cannot occur simultaneously. 
\end{lemma}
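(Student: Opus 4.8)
The plan is to normalize $\Phi$ to a unital map and then invoke the classical structure theory of order isomorphisms of C*-algebras. I would begin by recording two elementary facts about any order isomorphism $\Phi$ of $M_n$. First, $\Phi$ preserves adjoints: it carries the self-adjoint part $M_n^{\mathrm{sa}}$ of $M_n$ (the real span of the positive cone) bijectively onto itself, and since $\Phi$ is complex linear, writing $X=H+iK$ with $H,K\in M_n^{\mathrm{sa}}$ gives $\Phi(X^*)=\Phi(H)-i\Phi(K)=\Phi(X)^*$. Second, $P:=\Phi(I)$ is positive definite: it is positive because $I\ge 0$, and it is an order unit of $M_n^{\mathrm{sa}}$, since for any self-adjoint $Y=\Phi(H)$ a scalar $\lambda>0$ with $-\lambda I\le H\le\lambda I$ yields $-\lambda P\le Y\le\lambda P$; an order unit of $M_n^{\mathrm{sa}}$ cannot be singular (if $Pv=0$ for a unit vector $v$, then $\lambda P\ge vv^*$ fails for every $\lambda$). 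Consequently $\Psi:=\Phi_{P^{-1/2}}\circ\Phi$ is a composition of order isomorphisms (both $\Phi_{P^{-1/2}}$ and its inverse $\Phi_{P^{1/2}}$ are positive), hence an order isomorphism, and $\Psi(I)=P^{-1/2}\Phi(I)P^{-1/2}=I$, so $\Psi$ is unital.

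Now I would apply Kadison's theorem: a unital order isomorphism between the self-adjoint parts of two C*-algebras is a Jordan isomorphism, so, extending by complex linearity using the first fact above, $\Psi$ is a Jordan $*$-isomorphism of $M_n$. Since $M_n$ is simple, a Jordan isomorphism onto $M_n$ is either a $*$-isomorphism or a $*$-anti-isomorphism (Herstein's structure theorem for Jordan homomorphisms onto prime rings). By the Skolem--Noether theorem every $*$-automorphism of $M_n$ has the form $X\mapsto UXU^*$ with $U$ unitary, and since the transpose $t$ is itself a $*$-anti-automorphism, every $*$-anti-automorphism has the form $X\mapsto UX^tU^*$. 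Undoing the normalization via $\Phi=\Phi_{P^{1/2}}\circ\Psi$, in the first case $\Phi=\Phi_{P^{1/2}U}=\Phi_C$ with $C=P^{1/2}U$ invertible, while in the second case $\Phi(X)=P^{1/2}UX^tU^*P^{1/2}=(\Phi_{P^{1/2}U}\circ t)(X)$; a short computation gives $\Phi_C\circ t=t\circ\Phi_{\overline C}$, so $\Phi=t\circ\Phi_{C'}$ for a suitable invertible $C'$, and by the Proposition above the side on which $t$ is composed is immaterial.

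It remains to pin down the positivity type. The map $\Phi_C(X)=CXC^*$ is completely positive, being already in operator-sum form with one Kraus operator, and its inverse $\Phi_{C^{-1}}$ is likewise completely positive, so $\Phi_C$ is a complete order isomorphism; correspondingly $t\circ\Phi_C$ is a co-positive order isomorphism, since $t\circ(t\circ\Phi_C)=\Phi_C$ is a complete order isomorphism. Finally, suppose $n>1$ and both alternatives held at once, say $\Phi_C=t\circ\Phi_{C'}$. Then $t=\Phi_C\circ\Phi_{C'}^{-1}=\Phi_{C(C')^{-1}}$ would be completely positive; but for $n\ge 2$ the transpose on $M_n$ is not completely positive --- its Choi matrix $\sum_{i,j}E_{ij}\otimes E_{ji}$ is the swap operator, which has a negative eigenvalue --- a contradiction.

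The step I expect to require the most care is the passage from ``unital order isomorphism'' to ``Jordan $*$-isomorphism'': one must cite the appropriate form of Kadison's results and confirm that its hypotheses are genuinely satisfied for $M_n$ viewed as a finite-dimensional C*-algebra, so that the restriction of $\Phi$ to $M_n^{\mathrm{sa}}$ really is an order isomorphism in the sense Kadison uses (in finite dimensions there are no continuity issues). The remaining ingredients --- Skolem--Noether, Herstein's classification, complete positivity of $\Phi_C$, and the failure of complete positivity of the transpose --- are all standard.
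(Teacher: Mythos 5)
Your proof is correct and follows essentially the same route as the paper: normalize by $\Phi(I)^{1/2}$ to a unital order isomorphism, invoke Kadison's results to identify it as a Jordan $*$-isomorphism and hence a $*$-isomorphism or $*$-anti-isomorphism implemented by a unitary, then undo the normalization. The only genuine divergences are minor: the paper reaches the Jordan structure via Kadison's isometry theorems (unital order isomorphism $\Rightarrow$ isometry on Hermitians $\Rightarrow$ Jordan isomorphism) rather than citing his order-isomorphism theorem directly, and it rules out the simultaneous occurrence of both cases by observing that a map cannot be both a $*$-isomorphism and a $*$-anti-isomorphism of a nonabelian algebra, whereas you use the (equally valid and more concrete) failure of complete positivity of the transpose via the swap operator.
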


\begin{proof} First assume $\Phi$ is unital, i.e., that $\Phi(I) =  I$. For Hermitian matrices $A$, we have $\|A\|= \sup \{\lambda \in \R \mid -\lambda I \le A \le \lambda I\}$, so a unital order isomorphism is an isometry on Hermitian elements of $M_n$. It follows that $\Phi$ is an isometry on all of $M_n$ \cite[proof of Thm. 5]{Kadison}. Every unital isometry on $M_n$ is a Jordan isomorphism, i.e., preserves the Jordan product $A \circ B = (1/2)(AB + BA)$, cf. \cite[Thm. 7]{Kadison}.  Every Jordan isomorphism on $M_n$ is a *-isomorphism or *-anti-isomorphism \cite[Cor. 11]{Kadison}.  In the latter case, composing with the transpose map gives a *-isomorphism.  It is well known that every *-isomorphism of $M_n$ is conjugation by a unitary, see, for example,  \cite[Thm. 4.27]{AlfsenShultz}.  Thus there is a unitary $U$ such that $\Phi = \Phi_U$ or $\Phi = t \circ \Phi_U$.

Finally, let $\Phi$ be an arbitrary order isomorphism.   We will show $\Phi(I)$ is invertible.  Observe that $0 \le A \in M_n$ is invertible iff $A$ is an order unit, i.e., if for all $B = B^* \in M_n$ there exists  $\lambda \in \R$ such that $-\lambda A \le B \le \lambda A$. An order isomorphism takes order units to order units, so $\Phi(1)$ is invertible.  Let $D = \Phi(1)^{1/2}$, and define $\Psi = \Phi_{D^{-1}} \circ \Phi$. Then $\Psi$ is a unital order isomorphism, so by the first paragraph there exists a unitary $U$ such that $\Psi = \Phi_U$ or $\Psi = t \circ \Phi_U$. Then $\Phi = \Phi_D \circ \Psi =  \Phi_D \circ \Phi_U = \Phi_{DU}$, or else $\Phi = \Phi_D \circ t \circ \Phi_U = t \circ \Phi_C $ where $C = D^tU$.  Note that if both cases occur then $\Psi$ is both a *-isomorphism and a *-anti-isomorphism, which is possible only if $M_n$ is abelian, and that holds only when $n = 1$.
\end{proof}

Note that Lemma \ref{orderiso} implies that a composition of two complete order isomorphisms, or two co-positive order isomorphisms, is a complete order isomorphism, and a composition of a complete order isomorphism and a co-positive order isomorphism (in either order) is a co-positive order isomorphism.

\begin{theorem}\label{thm1} Let $\B$ be a basis of $M_n$.  Then $\D_\B$ is an order isomorphism iff there exists $C \in M_n$ such that either (1) $M_\B = \Phi_C$ or (2) $M_\B = t \circ \Phi_C$.  In the first case, $\D_\B$ is a complete order isomorphism, and in the second case it is a co-positive order isomorphism.   
\end{theorem}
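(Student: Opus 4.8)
The plan is to transfer the entire question from $\D_\B$ to the linear automorphism $M_\B$ of $M_n$, where Lemma~\ref{orderiso} applies directly. First note that $C_\B$, being a bijection of one basis of $M_n$ onto another, is invertible; hence so is $C_\B^T$, and therefore $M_\B = C_\B C_\B^T$ is invertible. Next, combining the definition $\wD_\B = \Gamma_\E \circ \D_\B$ with Theorem~\ref{main} and $\Gamma_\E = \D_\E^{-1}$ gives $\wD_\B = \Gamma_\E \circ \D_\E \circ M_\B^{-1} = M_\B^{-1}$. Since $\Gamma_\E = \D_\E$ is a complete order isomorphism (Theorem~\ref{PTT}), and since the classes of order isomorphisms and of complete order isomorphisms are closed under composition and under passing to inverses, $\D_\B = \D_\E \circ M_\B^{-1}$ is an order isomorphism (respectively a complete order isomorphism) if and only if $M_\B^{-1}$ is, if and only if $M_\B$ is. So it suffices to prove the statement with $\D_\B$ replaced by $M_\B$, and this is exactly what Lemma~\ref{orderiso} and the remark following it supply.

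Concretely: if $\D_\B$ is an order isomorphism, then $M_\B$ is an order isomorphism of $M_n$, so by Lemma~\ref{orderiso} there is an invertible $C \in M_n$ with $M_\B = \Phi_C$ or $M_\B = t \circ \Phi_C$. Conversely, suppose $M_\B = \Phi_C$ for some $C \in M_n$; since $M_\B$ is invertible, so is $\Phi_C$, which forces $C$ to be invertible, and then Lemma~\ref{orderiso} says $M_\B$ is a complete order isomorphism. Consequently $M_\B^{-1} = \Phi_{C^{-1}}$ is a complete order isomorphism, and composing with the complete order isomorphism $\D_\E$ shows $\D_\B = \D_\E \circ M_\B^{-1}$ is a complete order isomorphism. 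The case $M_\B = t \circ \Phi_C$ is parallel: invertibility of $M_\B$ again forces $C$ invertible, Lemma~\ref{orderiso} makes $M_\B$, hence also $M_\B^{-1} = \Phi_{C^{-1}} \circ t$, a co-positive order isomorphism --- using that $t$ is a co-positive order isomorphism, since $t \circ t$ is the identity, together with the remark after Lemma~\ref{orderiso} on compositions --- and then $\D_\B = \D_\E \circ M_\B^{-1}$ is a co-positive order isomorphism.

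I do not expect a genuine obstacle here: the substance lies entirely in Theorems~\ref{PTT} and~\ref{main} and Lemma~\ref{orderiso}. The only points needing care are bookkeeping ones --- tracking the three maps $\D_\B$, $\wD_\B = M_\B^{-1}$, and $M_\B$, and recording that the property of being a complete, or a co-positive, order isomorphism is preserved under inverses and under composition with such maps --- together with the observation that $M_\B$ is automatically invertible, so that the hypothesis ``$M_\B = \Phi_C$'' already entails that $C$ is invertible. Finally, the phrase ``$\D_\B$ is a co-positive order isomorphism'' for the map $M_n \to M_n^d$ is to be read through the complete order isomorphism $\Gamma_\E$, exactly as the parallel complete-order-isomorphism assertion was read before the statement of the theorem, so that it is equivalent to ``$\wD_\B$ is a co-positive order isomorphism''.
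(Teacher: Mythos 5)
Your proposal is correct and follows essentially the same route as the paper: reduce via $\D_\B = \D_\E \circ M_\B^{-1}$ (Theorem~\ref{main}) and the fact that $\D_\E$ is a complete order isomorphism (Theorem~\ref{PTT}) to the corresponding statement about $M_\B$, then invoke Lemma~\ref{orderiso}. The extra bookkeeping you supply (invertibility of $M_\B$, closure of the relevant classes under composition and inverses) is consistent with the paper's remark following Lemma~\ref{orderiso} and simply makes explicit what the paper leaves implicit.
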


\begin{proof} By Theorem \ref{main}, we have $\D_\B= \D_\E \circ M_\B^{-1}$.  Since $\D_E$ is a complete order isomorphism (Theorem  \ref{PTT}), then $D_\B$  is a complete order isomorphism (respectively, co-positive order isomorphism) if and only if $M_\B$ is a complete order isomorphism (respectively, co-positive order isomorphism). Now the theorem follows from Lemma \ref{orderiso}.
\end{proof}

\begin{corollary} Let $\B= \{B_j: 1 \le j \le n^2 \}$ be a basis for $M_n$. 
\begin{enumerate}
\item $M_\B= \Phi_C$ for some $C \in M_n$ if and only if $\Gamma_\B:M_n^d \to M_n$ is a complete order isomorphism,
\item  $M_\B^T = t \circ \Phi_C$ for some $C \in M_n$ if and only if $t \circ \Gamma_\B:M_n^d \to M_n$ is a complete order isomorphism.
\end{enumerate}
\end{corollary}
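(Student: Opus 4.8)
The plan is to reduce both statements to Theorem~\ref{main} and Lemma~\ref{orderiso}. By Theorem~\ref{main}, $\D_\B = \D_\E\circ M_\B^{-1}$, and taking inverses gives $\Gamma_\B = M_\B\circ\Gamma_\E$ as maps $M_n^d\to M_n$. Since $\Gamma_\E$ is a complete order isomorphism (Theorem~\ref{PTT}), and composing a linear map with a fixed complete order isomorphism does not change whether it is a complete order isomorphism, I conclude that $\Gamma_\B$ is a complete order isomorphism exactly when $M_\B\in L(M_n)$ is one, and that $t\circ\Gamma_\B = (t\circ M_\B)\circ\Gamma_\E$ is a complete order isomorphism exactly when $t\circ M_\B$ is one. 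So everything comes down to deciding which invertible elements of $L(M_n)$ are complete order isomorphisms, and then re-expressing the answer in terms of $M_\B$ and $M_\B^T$.

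For part (1): Lemma~\ref{orderiso} says that an order isomorphism of $M_n$ is either of the form $\Phi_C$, in which case it is a complete order isomorphism, or of the form $t\circ\Phi_C$, in which case it is a co-positive order isomorphism, and for $n>1$ these alternatives are mutually exclusive (for $n=1$ the transpose is the identity, so they coincide). Hence an invertible $\Psi\in L(M_n)$ is a complete order isomorphism if and only if $\Psi=\Phi_C$ for some $C\in M_n$; moreover such a $C$ is automatically invertible, since $Cv=0$ with $v\neq 0$ would make $\Phi_C$ annihilate $vv^*$. Applying this with $\Psi=M_\B$ gives part (1). (One can also read part (1) straight off Theorem~\ref{thm1} using $\Gamma_\B=\D_\B^{-1}$.)

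For part (2): applying the same dichotomy to $\Psi=t\circ M_\B$ shows that $t\circ\Gamma_\B$ is a complete order isomorphism if and only if $t\circ M_\B=\Phi_C$ for some $C$, that is (since $t\circ t=\mathrm{id}$) if and only if $M_\B=t\circ\Phi_C$ for some $C\in M_n$. It remains to see that this is equivalent to $M_\B^T=t\circ\Phi_C$ for some $C\in M_n$, and this is the one genuinely computational step. I would record how the outer transpose $L\mapsto L^T$ on $L(M_n)$ behaves: writing $\Phi_C$ in the orthonormal basis $\E$ identifies its matrix with the Kronecker product $C\otimes\overline C$, so $\Phi_C^T=\Phi_{C^T}$; the matrix of $t$ is a symmetric permutation matrix, so $t^T=t$; and $(A\circ B)^T=B^T\circ A^T$. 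These combine to give $(t\circ\Phi_C)^T=\Phi_{C^T}\circ t=t\circ\Phi_{C^*}$. Since $(M_\B^T)^T=M_\B$ and $C\mapsto C^*$ is a bijection of $M_n$, the conditions ``$M_\B=t\circ\Phi_C$ for some $C$'' and ``$M_\B^T=t\circ\Phi_C$ for some $C$'' are equivalent, which is part (2). The only real obstacle is the bookkeeping: keeping the two transposes straight — $t$ on $M_n$ versus $T$ on $L(M_n)$ — and tracking how $T$ interacts with $C\mapsto C^*$ and with composition; once the identity $(t\circ\Phi_C)^T=t\circ\Phi_{C^*}$ is in hand, the rest is formal.
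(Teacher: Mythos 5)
Your proposal is correct and rests on the same ingredients as the paper's proof: Theorem~\ref{main} to write $\Gamma_\B = M_\B\circ\Gamma_\E$, Theorem~\ref{PTT} to strip off $\Gamma_\E$, and the dichotomy of Lemma~\ref{orderiso} (equivalently Theorem~\ref{thm1}) to characterize when the remaining automorphism of $M_n$ is a complete order isomorphism. The one place where you go beyond the paper is worth noting: the paper's proof of (2) simply invokes Theorem~\ref{thm1} with the condition ``$M_\B = t\circ\Phi_C$'' and never addresses the $T$ appearing in the statement, whereas you explicitly reconcile ``$M_\B = t\circ\Phi_C$ for some $C$'' with ``$M_\B^T = t\circ\Phi_C$ for some $C$'' via the (correct) identity $(t\circ\Phi_C)^T = t\circ\Phi_{C^*}$, together with the bijectivity of $C\mapsto C^*$. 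That computation is fine, but there is a shorter route you could have taken: $M_\B = C_\B C_\B^T$ is symmetric as an element of $L(M_n)$, since $M_\B^T = (C_\B C_\B^T)^T = C_\B C_\B^T = M_\B$, so the two conditions coincide trivially. Your observation that any $C$ with $\Phi_C = M_\B$ is automatically invertible (because $Cv=0$ would force $\Phi_C(vv^*)=0$) is a small but genuine point that the paper leaves implicit.
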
 
\begin{proof} We prove the second statement.  By Theorem~\ref{thm1}, $M_\B =t \circ \Phi_C$ iff $D_\B$ is a co-positive order isomorphism. This is equivalent to $\Gamma_\B  = D_\B^{-1}$ being a co-positive order isomorphism, and hence to $t \circ \Gamma_\B $ being a complete order isomorphism.

 
\end{proof}

Both Choi and Jamio\l kowski have defined useful correspondences that associate a matrix in $M_n \otimes M_p$ with each linear map $\Phi:M_n \to M_p$. Choi's correspondence is $\Phi \mapsto C_\Phi$, where
\begin{equation}
\label{Choimatrix}C_\Phi = \sum_{ij} E_{ij} \otimes \Phi(E_{ij}).
\end{equation}
As remarked in the proof of Theorem \ref{PTT}, positivity of the Choi matrix \eqref{Choimatrix} is equivalent to positivity of $\sum_{ij} \Phi(E_{ij}) \otimes E_{ij}$, and it is this latter form that we generalize below.  
We now describe a related correspondence  defined by Jamio\l kowski \cite{Jamiol}.  If $\Phi:M_n \to M_p$, then $\J(\Phi)$ is defined by the condition $\<\J(\Phi), A^* \otimes B\> = \<\Phi(A), B\>$ for all $A \in M_n$, $B \in M_p$. 
This is equivalent to 
\begin{equation}\label{Jam} J(\Phi) = \sum_{ij} E_{ij}^* \otimes \Phi(E_{ij}).
\end{equation}
Regarding our current investigation, the Choi matrix $C_\Phi$ has the property that $C_\Phi \ge 0$ iff $\Phi$ is completely positive, \cite{Choi}. The Jamio\l kowski correspondence has the property that in \eqref{Jam}, $\J(\Phi)$ is unchanged if the basis $\{E_{ij}\}$ is replaced by any orthonormal basis of $M_n$. Our correspondence will be closer to Choi's.

\begin{corollary}\label{genlchoi} Let $\B= \{B_j: 1 \le j \le n^2 \}$ be a basis for $M_n,$ and let $\Psi:M_n \to M_p$ be a linear map.
\begin{enumerate}
\item If $M_\B= \Phi_C$ for some $C \in M_n,$ then $\Psi$ is completely positive if and only if $\sum_{j=1}^{n^2} \Psi(B_j) \otimes B_j \in (M_p \otimes M_n)^+.$ 
\item  If $M_\B = t \circ \Phi_C$ for some $C \in M_n,$ then $\Psi$ is completely positive if and only if $\sum_{j=1}^{n^2} \Psi(B_j) \otimes B_j^t \in (M_p \otimes M_n)^+.$
\item If $M_\B = t \circ \Phi_C$ for some $C \in M_n,$ then $\Psi$ is completely co-positive if and only if $\sum_{j=1}^{n^2} \Psi(B_j) \otimes B_j \in (M_p \otimes M_n)^+.$
\end{enumerate}
\end{corollary}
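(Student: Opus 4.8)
The plan is to reduce all three statements to the corollary immediately preceding this one, by giving a concrete description of the amplification $\Gamma_\B^{(p)}$. Fix a linear map $\Psi:M_n\to M_p$ and define functionals $f_{k,l}\in M_n^d$ by $f_{k,l}(X)=\big(\Psi(X)\big)_{k,l}$, so that $\Psi(X)=\sum_{k,l=1}^p f_{k,l}(X)E_{k,l}$. By the very definition of the matrix ordering on $M_n^d$, the map $\Psi$ is completely positive if and only if $(f_{k,l})\in M_p(M_n^d)^+$. On the other hand, using $\Gamma_\B(f)=\sum_j f(B_j)B_j$ together with the identification $M_p(M_n)=M_p\otimes M_n$, the same one-line computation as in the proof of Theorem~\ref{PTT} (with $\E$ replaced by $\B$) yields
\begin{equation*}
\Gamma_\B^{(p)}\big((f_{k,l})\big)\;=\;\sum_{k,l}E_{k,l}\otimes\Gamma_\B(f_{k,l})\;=\;\sum_{j=1}^{n^2}\Big(\sum_{k,l}f_{k,l}(B_j)E_{k,l}\Big)\otimes B_j\;=\;\sum_{j=1}^{n^2}\Psi(B_j)\otimes B_j,
\end{equation*}
and likewise $\big(t\circ\Gamma_\B\big)^{(p)}\big((f_{k,l})\big)=\sum_{j}\Psi(B_j)\otimes B_j^t$, where $t$ is the transpose on $M_n$. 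Note that, unlike in the proof of Theorem~\ref{PTT}, no interchange of the two tensor legs is needed here, since the order $M_p\otimes M_n$ produced by $\Gamma_\B^{(p)}$ is already the one appearing in the statement.

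By the preceding corollary, the hypothesis of (1) is equivalent to $\Gamma_\B$ being a complete order isomorphism, while the hypothesis of (2) (and of (3)) is equivalent to $t\circ\Gamma_\B$ being a complete order isomorphism. Recall that an invertible linear map between matrix ordered spaces is a complete order isomorphism exactly when, for every $p$, its $p$-th amplification carries the positive cone onto the positive cone. Hence, under the hypothesis of (1), $(f_{k,l})\in M_p(M_n^d)^+$ iff $\Gamma_\B^{(p)}\big((f_{k,l})\big)=\sum_j\Psi(B_j)\otimes B_j\in M_p(M_n)^+=(M_p\otimes M_n)^+$, which together with the two characterizations above is precisely (1). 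Similarly, under the hypothesis of (2), $(f_{k,l})\in M_p(M_n^d)^+$ iff $\big(t\circ\Gamma_\B\big)^{(p)}\big((f_{k,l})\big)=\sum_j\Psi(B_j)\otimes B_j^t$ is positive, which is precisely (2).

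For (3), recall that $\Psi$ is completely co-positive if and only if $t_p\circ\Psi$ is completely positive, where $t_p$ denotes the transpose on $M_p$. Applying (2) to $t_p\circ\Psi$ in place of $\Psi$ (the hypothesis $M_\B=t\circ\Phi_C$ is the same), this holds iff $\sum_j\big(t_p\circ\Psi\big)(B_j)\otimes B_j^t=\sum_j\Psi(B_j)^t\otimes B_j^t$ is positive in $M_p\otimes M_n$. Finally, under the identification $M_p\otimes M_n\cong M_{pn}$ one has $A^t\otimes B^t=(A\otimes B)^t$, so $\sum_j\Psi(B_j)^t\otimes B_j^t=\big(\sum_j\Psi(B_j)\otimes B_j\big)^t$; since the transpose on $M_{pn}$ is an order automorphism, positivity of the left-hand side is equivalent to positivity of $\sum_j\Psi(B_j)\otimes B_j$, which gives (3).

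The argument is essentially bookkeeping once the earlier results are available, and I do not foresee a genuine obstacle. The points requiring care are: keeping the three transposes apart — the transpose $t$ on $M_n$, the transpose $t_p$ on $M_p$, and the global transpose on $M_{pn}$ — and not confusing any of them with the operator transpose ``$T$'' built into $M_\B$; invoking the identification $M_p(M_n)^+=(M_p\otimes M_n)^+$ recorded earlier; and checking that the displayed formula for $\Gamma_\B^{(p)}$ really is obtained verbatim from the proof of Theorem~\ref{PTT}, which it is because $\Gamma_\B(f)=\sum_j f(B_j)B_j$ has the same structural shape as $\Gamma_\E(f)=\sum_{i,j}f(E_{i,j})E_{i,j}$.
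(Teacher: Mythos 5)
Your proposal is correct and follows essentially the same route as the paper: identify $\Psi$ with the matrix of functionals $(f_{k,l})$, compute $\Gamma_\B^{(p)}((f_{k,l}))=\sum_j\Psi(B_j)\otimes B_j$ exactly as in the proof of Theorem~\ref{PTT}, invoke the preceding corollary to see that $\Gamma_\B$ (resp.\ $t\circ\Gamma_\B$) is a complete order isomorphism under the respective hypothesis, and obtain (3) from (2) by replacing $\Psi$ with $t_p\circ\Psi$ and applying the global transpose. The only difference is that you spell out a couple of steps the paper leaves implicit (the explicit appeal to the preceding corollary and the remark that no swap of tensor factors is needed), which is harmless.
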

\begin{proof} To prove the first statement, for $1 \le k, l \le p$ define $f_{k,l} \in M_n^d$ by  $\Psi(X) = (f_{k,l}(X))$.  Then by the definition of the order on $M_p(M_n^d)$ discussed earlier, $\Psi$ is completely positive if and only if $(f_{k,l}) \in M_p(M_n^d)^+$ which holds if and only if $(\Gamma_\B(f_{k,l})) \in M_p(M_n)^+.$ But as in the proof of Theorem~1, we have that
\[ (\Gamma_\B(f_{k,l})) = \sum_{j=1}^{n^2} \Psi(B_j) \otimes B_j.\]

To prove the second statement, note that $\Psi$ is completely positive if and only if $(t \circ \Gamma_\B(f_{k,l})) \in M_p(M_n)^+$ and this matrix is seen to be equal to
\[ \sum_{j=1}^{n^2} \Psi(B_j) \otimes B_j^t.\]

For the third statement, replace $\Psi$ by $t \circ \Psi$ in the second statement. This shows $\Psi$ is completely co-positive iff $\sum_{j=1}^{n^2} \Psi(B_j)^t \otimes B_j^t \in (M_p \otimes M_n)^+$, and now applying the transpose map gives (3).
\end{proof}

We now point out that bases with the properties indicated in the Corollary are related to entanglement witnesses. Indeed suppose $\B = \{B_1, \ldots, B_{n^2}\}$ is a basis of $M_n$ for which Corollary \ref{genlchoi} (1) holds. Taking $\Phi = I$ we have $\sum_i B_i \otimes B_i \ge 0$.  Let $\Phi:M_n \to M_n$ be a map that is positive but not completely positive. Define
$$B_0 = \sum_i B_i \otimes B_i \text{ and } B_\Phi = \sum_i \Phi(B_i) \otimes B_i .$$
Then for any positive $X, Y$, since $\Phi \ge 0$, we have
\begin{align}
\<B_\Phi, X \otimes Y\> &= \<\sum_i \Phi(B_i) \otimes B_i , X \otimes Y\> \cr
&= \<(\Phi \otimes I)B_0, X\otimes Y) = \<B_0, \Phi^*(X) \otimes Y\> \ge 0,
\end{align}
and hence $B_\Phi$ is $\ge 0$ on all separable states.  Since $\Phi$ is not completely positive, then $B_\Phi \not\ge 0$, so there is a state $A$ such that $\<B_\Phi , A\> \not\ge 0$. Such a state is then entangled, so $B_\Phi$ is an entanglement witness.

\section*{Examples}

\begin{notation}  If $x, y \in \C^n$, then $R_{x,y} \in M_n$ is the rank one operator defined by
$R_{x,y}z = \<z, y\>x$.  Observe that $E_{ij} = R_{e_i, e_j}$.
\end{notation}

\begin{proposition} \label{rank one}  Let $(\lambda_{ij}) \in M_n$, with all $\lambda_{ij}$ nonzero, and let $\B$ be the basis $\{\lambda_{ij}E_{ij}\}$.
Then $\D_\B$ is an  order isomorphism if and only if the matrix $(\lambda_{ij}^2)$ is positive semi-definite with rank one. In that case,  there are scalars $\alpha_1, \ldots, \alpha_n$ such that $\lambda_{ij}^2 = \alpha_i\overline{\alpha}_j $, and if $C= \diag(\alpha_1, \ldots, \alpha_n)$, then ${\wD}_\B= \Phi_C$, and hence $\D_\B$ is a complete order isomorphism.
\end{proposition}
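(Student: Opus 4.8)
The plan is to use Theorem~\ref{thm1}, which turns the question into one about the structure of the operator $M_\B\in L(M_n)$. First I would compute $M_\B$ for the basis $\B=\{\lambda_{ij}E_{ij}\}$. The obvious change of basis map is the operator $C_\B$ with $C_\B(E_{ij})=\lambda_{ij}E_{ij}$; its matrix in the orthonormal basis $\E$ is diagonal with entries $\lambda_{ij}$, hence symmetric, so $C_\B^T=C_\B$ and therefore $M_\B=C_\B C_\B^T$ is the diagonal map
\[ M_\B(E_{ij})=\lambda_{ij}^2\,E_{ij}\qquad(1\le i,j\le n). \]
By Theorem~\ref{thm1}, $\D_\B$ is an order isomorphism exactly when $M_\B=\Phi_C$ or $M_\B=t\circ\Phi_C$ for some $C\in M_n$ (necessarily invertible, since $M_\B$ is), and it is a complete order isomorphism precisely in the first case. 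So the whole proposition reduces to deciding when the diagonal map above has one of these two forms.

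The heart of the argument is that either form for $M_\B$ forces $C$ to be diagonal. Evaluating the candidate identity on $E_{ii}=R_{e_i,e_i}$, one side is $\lambda_{ii}^2R_{e_i,e_i}$ while the other is $CE_{ii}C^*=R_{Ce_i,Ce_i}$ (or its transpose in case (2)); since $\lambda_{ii}\ne 0$, this rank-one positive matrix (or its transpose) must be a nonzero multiple of $R_{e_i,e_i}$, so $Ce_i$, or its complex conjugate, is a scalar multiple of $e_i$. Hence $C=\diag(c_1,\dots,c_n)$ with $|c_i|^2=\lambda_{ii}^2$, and in particular every $c_i\ne 0$. With $C$ diagonal, $\Phi_C(E_{ij})=c_i\overline{c_j}E_{ij}$, so $M_\B=t\circ\Phi_C$ would require $\lambda_{ij}^2E_{ij}=c_i\overline{c_j}E_{ji}$ for all $i,j$; for $n>1$ and $i\ne j$ this is impossible since $E_{ij}\ne E_{ji}$ while $\lambda_{ij}\ne 0$ (when $n=1$ the transpose is trivial and case (2) coincides with case (1)). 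Thus only $M_\B=\Phi_C$ can occur, and it holds precisely when $\lambda_{ij}^2=c_i\overline{c_j}$ for all $i,j$.

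Finally I would restate this condition. The identities $\lambda_{ij}^2=c_i\overline{c_j}$ say exactly that $(\lambda_{ij}^2)$ equals $cc^*$ for the column vector $c=(c_1,\dots,c_n)^t$, i.e.\ that $(\lambda_{ij}^2)$ is positive semidefinite of rank one (rank exactly one, since $c\ne 0$). Conversely, if $(\lambda_{ij}^2)$ is positive semidefinite of rank one, write it as $\alpha\alpha^*$, i.e.\ $\lambda_{ij}^2=\alpha_i\overline{\alpha_j}$ with $\alpha=(\alpha_1,\dots,\alpha_n)^t$; then $\lambda_{ii}^2=|\alpha_i|^2\ne 0$ forces each $\alpha_i\ne 0$, so $C=\diag(\alpha_1,\dots,\alpha_n)$ is invertible and $\Phi_C=M_\B$, whence $\D_\B$ is a complete order isomorphism by Theorem~\ref{thm1}; and $\wD_\B=\Gamma_\E\circ\D_\B=M_\B^{-1}$ is again conjugation by a diagonal matrix, hence of the form $\Phi_C$. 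The only step that is not pure bookkeeping on top of Theorem~\ref{thm1} is the observation that $C$ must be diagonal, which is why I would isolate it as above.
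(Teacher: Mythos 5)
Your proof is correct and follows essentially the same route as the paper's: compute that $M_\B$ is the diagonal map $E_{ij}\mapsto\lambda_{ij}^2E_{ij}$, use rank-one range considerations to force $C$ diagonal in the case $M_\B=\Phi_C$ (yielding the positive rank-one condition on $(\lambda_{ij}^2)$), and rule out the co-positive case $M_\B=t\circ\Phi_C$ by a contradiction with the off-diagonal matrix units. The only differences are cosmetic (you extract diagonality of $C$ from the $E_{ii}$ alone and you are slightly more careful than the paper in noting that $\wD_\B=M_\B^{-1}$ is conjugation by the \emph{inverse} diagonal matrix, still of the form $\Phi_{C'}$).
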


\begin{proof} Note that $\cC_\B  $ is diagonal for the standard basis of $M_n$, so $\cC_\B  ^T = \cC_\B  $. Thus $M_\B(E_{ij})= (\cC_\B \cC_\B^T)( E_{ij} )= \lambda_{ij}^2 E_{ij}$.

Suppose that the map $\wD_\B:M_n \to M_n$ is an order isomorphism. We first consider the case where $\wD_\B^{-1} =M_\B = \Phi_C$.  Then $\Phi_C(E_{ij}) = \lambda_{ij}^2 E_{ij}$, so
$$\lambda_{ij}^2 E_{ij} = CE_{ij}C^* = CR_{e_i,e_j}C^* = R_{Ce_i, Ce_j}.$$
The ranges of the two sides must coincide, so for each $i$ there is a scalar $\alpha_i$ such that  $Ce_i = \alpha_i e_i$. Substituting into the displayed equation gives $\lambda_{ij}^2 E_{ij} = \alpha_i\overline{\alpha_j} E_{ij}$, so $\lambda_{ij}^2 = \alpha_i\overline{\alpha_j}$.  Thus the matrix $(\lambda_{ij}^2)$ has rank one and is positive. Conversely, if $(\lambda_{ij}^2)$ is positive with rank one, then there are nonzero scalars $\alpha_1, \ldots, \alpha_n$ such that $\alpha_i\overline{\alpha_j} = \lambda_{ij}^2$.  If $C= \diag(\alpha_1, \ldots, \alpha_n)$, then one readily verifies that $M_\B= \Phi_C$.  Then $\wD_\B^{-1}$ is a complete order isomorphism, and hence so is $\D_\B$.

Now we examine the possibility that $\cC_\B \cC_\B^T = t \circ \Phi_C$. Then
$$\lambda_{ij}^2 E_{ij} = (CE_{ij}C^*)^t = {C^*}^t E_{ji} C^t.$$
Let $D = {C^*}^t$, so that $\lambda_{ij}^2 E_{ij} = DE_{ji}D^*$. Then for all $i, j$
$$\lambda_{ij}^2 R_{e_i,e_j} = R_{De_j, De_i}.$$
This implies that $De_j$ is a multiple of $e_i$ for all $i, j$, which is impossible. 
\end{proof}

\begin{example}\label{onbasis} If $C \in M_n$ is invertible, then $\Phi_C^T = \Phi_{C^t}$, so $\Phi_C\Phi_C^T= \Phi_{CC^t}$. Hence  for the basis 
$ \B = \{\Phi_C(E_{ij})\}$, we have $M_\B =  \Phi_{CC^t}$, so by Theorem~\ref{thm1}, $\B$ has the property that the map from this basis  to its dual basis is a complete order isomorphism.  In particular, if $\{F_{ij}\}$ is any system of matrix units for $M_n$, there is a unitary $V$ such that $\Phi_V$ satisfies $\Phi_V(E_{ij})  = F_{ij}$ for all $i, j$, and so the map from $\{F_{ij}\}$ to its dual basis is a complete order isomorphism.

On the other hand, if $U: M_n \to M_n$ is unitary with respect to the Hilbert Schmidt inner product and takes $\E$ to a basis $\B$, it need not be the case that the duality map $\D_\B$ is a complete order isomorphism, as can be seen from Proposition~\ref{rank one} with $\lambda_{11} = i$ and $\lambda_{ij} = 1$ for $(i,j) \not= (1,1)$. Hence, not every orthonormal basis of $M_n$ has the property that the duality map is an order isomorphism.\end{example}



For our next application we study the Pauli spin matrices. 

\begin{theorem} Let $\B = \{\sigma_0, \sigma_1, \sigma_2, \sigma_3\}$ be the Pauli spin matrices, i.e.,
$$\sigma_0 =  \begin{pmatrix}1&0\cr 0&1\end{pmatrix},\quad
\sigma_1 =  \begin{pmatrix}0&1\cr 1&0\end{pmatrix},\quad
\sigma_2 =  \begin{pmatrix}0&-i\cr i&0\end{pmatrix},\quad
\sigma_3 =  \begin{pmatrix}1&0\cr 0&-1\end{pmatrix}.
$$
Then the duality map $\D_\B$ is a co-positive order isomorphism.  Furthermore,  let $\Psi: M_{2^n} \to M_p$ be a linear map. Then $\Psi$ is completely positive if and only if
\[ \sum_{i_1,...,i_n=0}^3 \Psi(\sigma_{i_1} \otimes \cdots \otimes \sigma_{i_n}) \otimes \sigma_{i_1}^t \otimes \cdots \otimes \sigma_{i_n}^t \]
is a positive $2^np \times 2^np$ matrix.

Similarly, $\Psi$ is completely co-positive if and only if
\[ \sum_{i_1,...,i_n=0}^3 \Psi(\sigma_{i_1} \otimes \cdots \otimes \sigma_{i_n}) \otimes \sigma_{i_1} \otimes \cdots \otimes \sigma_{i_n} \]
is a positive $2^np \times 2^np$ matrix.
\end{theorem}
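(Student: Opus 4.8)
The plan is to reduce the whole statement to one $4\times 4$ matrix computation together with a multiplicativity property of the assignment $\B\mapsto M_\B$ under tensor products, after which Theorem~\ref{thm1} and Corollary~\ref{genlchoi} do the rest. First I would treat the base case $n=1$, i.e. the Pauli basis of $M_2$. Ordering the standard matrix units as $E_{11},E_{12},E_{21},E_{22}$, a change of basis map $C_\B$ taking $\E$ to $\{\sigma_0,\sigma_1,\sigma_2,\sigma_3\}$ is the matrix whose columns are the $\E$-coordinates of $\sigma_0,\dots,\sigma_3$, namely $(1,0,0,1)^t,(0,1,1,0)^t,(0,-i,i,0)^t,(1,0,0,-1)^t$. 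A direct multiplication yields
\[
M_\B=C_\B C_\B^T=2\begin{pmatrix}1&0&0&0\\0&0&1&0\\0&1&0&0\\0&0&0&1\end{pmatrix},
\]
which is exactly $2$ times the matrix of the transpose map $t$ on $M_2$ written in the basis $\E$. Thus $M_\B=t\circ\Phi_{\sqrt2\,I_2}$, since $2\,\mathrm{id}_{M_2}=\Phi_{\sqrt2\,I_2}$, so we are in case (2) of Theorem~\ref{thm1} and $\D_\B$ is a co-positive order isomorphism.

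Next I would record that $M_{\B^{\otimes n}}=M_\B^{\otimes n}$ under the identification $M_2^{\otimes n}=M_{2^n}$. The point is that $\E^{\otimes n}$ is, as a set, precisely the family of standard matrix units of $M_{2^n}$, so $C_\B^{\otimes n}$ is a change-of-basis map onto $\B^{\otimes n}$; the operator transpose used in the combination $C\mapsto CC^T$ that defines $M_{(\cdot)}$ is insensitive to the ordering of an orthonormal basis (by the remark following the definition of $M_\B$) and distributes over tensor products, so $M_{\B^{\otimes n}}=C_\B^{\otimes n}(C_\B^T)^{\otimes n}=(C_\B C_\B^T)^{\otimes n}=M_\B^{\otimes n}$. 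Using the $n=1$ computation and that the matrix transpose on $M_{2^n}$ is the tensor product of the matrix transposes on the $M_2$ factors,
\[
M_{\B^{\otimes n}}=(2\,t_{M_2})^{\otimes n}=2^n\,t_{M_{2^n}}=t_{M_{2^n}}\circ\Phi_{2^{n/2}I_{2^n}},
\]
so $M_{\B^{\otimes n}}$ is again of the form $t\circ\Phi_C$, with $C=2^{n/2}I_{2^n}$ (hence $\D_{\B^{\otimes n}}$ is likewise a co-positive order isomorphism, and in particular so is the $n=1$ duality map).

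It then remains to apply Corollary~\ref{genlchoi} to the basis $\B^{\otimes n}=\{\sigma_{i_1}\otimes\cdots\otimes\sigma_{i_n}\}$ of $M_{2^n}$. Part~(2) gives that $\Psi:M_{2^n}\to M_p$ is completely positive if and only if $\sum_{i_1,\dots,i_n}\Psi(\sigma_{i_1}\otimes\cdots\otimes\sigma_{i_n})\otimes(\sigma_{i_1}\otimes\cdots\otimes\sigma_{i_n})^t\ge0$, and since $(\sigma_{i_1}\otimes\cdots\otimes\sigma_{i_n})^t=\sigma_{i_1}^t\otimes\cdots\otimes\sigma_{i_n}^t$ this is the first displayed positivity condition. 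Part~(3) gives that $\Psi$ is completely co-positive if and only if $\sum_{i_1,\dots,i_n}\Psi(\sigma_{i_1}\otimes\cdots\otimes\sigma_{i_n})\otimes\sigma_{i_1}\otimes\cdots\otimes\sigma_{i_n}\ge0$, which is the second, completing the proof.

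The main obstacle, and the only step that is not immediate, is making the tensor-power bookkeeping of the second paragraph precise: one must verify that $M_\B$ is unaffected by the reordering of basis vectors implicit in identifying $M_2^{\otimes n}$ with $M_{2^n}$, that the operator transpose $(\cdot)^T$ defining $M_{\B^{\otimes n}}$ factors as a tensor product of operator transposes on the factors, and that the matrix transpose on $M_{2^n}$ is the tensor product of the matrix transposes on the $M_2$ factors. Each of these reduces to the fact that the two natural orderings of $\E^{\otimes n}$ differ by an orthogonal permutation; once that is spelled out the identity $M_{\B^{\otimes n}}=2^n\,t_{M_{2^n}}$ is routine and the theorem follows by quoting Theorem~\ref{thm1} and Corollary~\ref{genlchoi}.
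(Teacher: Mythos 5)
Your proposal is correct and follows essentially the same route as the paper: compute the $4\times4$ matrix of $M_\B=C_\B C_\B^T$ in the standard basis, recognize it as twice the transpose map so that Theorem~\ref{thm1}(2) gives a co-positive order isomorphism, observe that $M_{\B^{\otimes n}}=M_\B^{\otimes n}$ is again a positive multiple of the transpose on $M_{2^n}$, and then quote Corollary~\ref{genlchoi}(2) and (3). Your treatment is in fact slightly more careful than the paper's at two points: you record the scalar $2^n$ explicitly (writing $2^n t = t\circ\Phi_{2^{n/2}I}$, where the paper simply says the composite ``will be the transpose map''), and you justify why the operator transpose and the map $\B\mapsto M_\B$ behave well under the reordering implicit in identifying $\E^{\otimes n}$ with the standard matrix units of $M_{2^n}$.
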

\begin{proof}
Let $\cC_\B$ be the linear map such that
$$\cC_\B(E_{11}) = \sigma_0, \quad \cC_\B(E_{12}) = \sigma_1,\quad \cC_\B(E_{21})  = \sigma_2,\quad \cC_\B(E_{22})= \sigma_3.$$
Then the matrix for $M_\B =\cC_\B \cC_\B^T$ in the standard basis of $M_2$ is
$$[\cC_\B][\cC_\B^T] = \begin{pmatrix} 2&0&0&0\cr 0&0&2&0\cr 0&2&0&0\cr 0&0&0&2\end{pmatrix}$$
which  is twice the matrix of the transpose map $t:M_2 \to M_2$. Thus $\D_\B$ in this case is a co-positive order isomorphism.

Applying Corollary~\ref{genlchoi}, we see that a map $\Psi:M_2 \to M_p$ is completely positive if and only if
\[ \sum_{j=0}^3 \Psi(\sigma_j) \otimes \sigma_j^t \in (M_p \otimes M_2)^+.\]
Using the explicit form of the Pauli matrices, we obtain that $\Psi$ is completely positive if and only if
\[ \begin{bmatrix} \Psi(\sigma_0) + \Psi(\sigma_3), & \Psi(\sigma_1)+ i \Psi(\sigma_2)\\ \Psi(\sigma_1) - i \Psi(\sigma_2), & \Psi(\sigma_0) - \psi(\sigma_3) \end{bmatrix} \]
is positive in $M_2(M_p),$ which is identical to Choi's theorem.


On $M_{2^n}$ the tensored spin matrices 
$\B^{\otimes n} = \{\sigma_{i_1} \otimes \cdots \otimes \sigma_{i_n}\mid 0 \le i_j \le 3 \}$ form an orthonormal basis which we will call the spin basis. The standard basis of matrix units of $M_{2^n}$ consists of the tensor products of the matrix units of $M_2$. The map $\cC_{\B^{\otimes n}}: M_{2^n} \to M_{2^n}^d$ taking the standard basis of matrix units to this spin basis is then the tensor product of the maps on each factor $M_2$, so $(\cC_{\B^{\otimes n}})(\cC_{\B^{\otimes n}})^T$ will be the transpose map on $M_{2^n}$. Thus the map from the spin basis on $M_{2^n}$ to its dual basis will also be a co-positive order isomorphism.

Again applying Corollary~\ref{genlchoi} yields that a map, $\Psi: M_{2^n} \to M_p$ is completely positive if and only if
\[ \sum_{i_1,...,i_n=0}^3 \Psi(\sigma_{i_1} \otimes \cdots \otimes \sigma_{i_n}) \otimes \sigma_{i_1}^t \otimes \cdots \otimes \sigma_{i_n}^t \]
is a positive $2^np \times 2^np$ matrix.

Similarly, $\Psi$ is completely co-positive if and only if
\[ \sum_{i_1,...,i_n=0}^3 \Psi(\sigma_{i_1} \otimes \cdots \otimes \sigma_{i_n}) \otimes \sigma_{i_1} \otimes \cdots \otimes \sigma_{i_n} \]
is a positive $2^np \times 2^np$ matrix.

\end{proof}

For our final application we study the map from the Weyl basis to
its dual basis. We will compute the duality map for the Weyl basis, with the conclusion that this map is a complete order isomorphism for $n =2$, but is not an order isomorphism for $n > 2$. Below $n> 1$ is a positive integer, and all indices are viewed as members of $\Z_n$.   

\begin{definition} Let $e_0, \ldots, e_{n-1}$ be the standard basis of $\C^n$, and $\B=\{E_{ab} \mid a, b \in \Z_n\}$ the corresponding matrix units. Let $U, V \in M_n$ be defined by $Ve_j = z^j e_j$ and $Ue_j =e_{j+1}$ where $z = \exp(2\pi i/n)$  Then $\{{1\over\sqrt{n}}U^a V^b \mid a, b \in \Z_n\}$ is an orthonormal basis for $M_n$ which we call the {\em Weyl basis $\W$.} 
\end{definition}

The unitary matrices $ \{U^a V^b \mid a, b \in \Z_n\}$ are usually called the \emph{discrete Weyl matrices} or the \emph{generalized Pauli matrices}.

\begin{lemma} \label{L1} Define $C_\W \in L(M_n)$ by $C_\W(E_{ab}) = {1\over \sqrt{n}}U^aV^b$. With respect to the standard basis of matrix units, we have the following matrix entries for $C_\W$ and $C_\W C_\W^T $:
$$[C_\W ]_{ab,cd} = z^{db} \delta_{b+c,a} \text{ and } [C_\W C_\W ^T]_{ab,cd} =  \delta_{b,-d}\delta_{a,c-2d}.$$
\end{lemma}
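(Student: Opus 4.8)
The plan is to compute the two matrices by unwinding the definition of $C_\W$ as a linear map on $M_n$ represented in the standard basis of matrix units $\{E_{cd}\}$, keeping careful track of the index conventions (all indices living in $\Z_n$). First I would compute $[C_\W]_{ab,cd}$: by definition $C_\W(E_{cd}) = \tfrac{1}{\sqrt n} U^c V^d$, and $[C_\W]_{ab,cd}$ is the $(ab)$-coefficient of this matrix when expanded in the $\{E_{ab}\}$ basis, i.e.\ the $(a,b)$-entry of the matrix $\tfrac{1}{\sqrt n}U^c V^d$. Using $V e_j = z^j e_j$ and $U e_j = e_{j+1}$, one gets $U^c V^d e_j = z^{jd} e_{j+c}$, so $U^c V^d = \sum_j z^{jd} E_{j+c, j}$; matching the $(a,b)$ entry forces $b = j$ and $a = j+c$, giving $[C_\W]_{ab,cd} = \tfrac{1}{\sqrt n} z^{bd}\delta_{a, b+c}$. (I note the statement as written has the roles of the pairs slightly different from a naive reading — $[C_\W]_{ab,cd}=z^{db}\delta_{b+c,a}$ — and I would reconcile my computation with their convention, possibly there is an implicit normalization or an index transposition in how $[C_\W]$ is read off; I would state the convention explicitly and verify consistency on a small case such as $n=2$.)

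Next I would compute $[C_\W^T]_{ab,cd} = [C_\W]_{cd,ab} = \tfrac{1}{\sqrt n} z^{bd}\delta_{c,d+a}$ (again up to the convention fixed above), simply by transposing the matrix. Then the main computation is the matrix product
\[
[C_\W C_\W^T]_{ab,cd} = \sum_{e,f} [C_\W]_{ab,ef}\,[C_\W^T]_{ef,cd} = \sum_{e,f} [C_\W]_{ab,ef}\,[C_\W]_{cd,ef},
\]
where the sum runs over all pairs $(e,f)\in\Z_n\times\Z_n$. Substituting the formula for $[C_\W]$, the Kronecker deltas $\delta_{a,f+e}$ and $\delta_{c,f+e}$ (from the two factors) immediately force $a = c$ on their support — but that cannot be the whole story since the claimed answer $\delta_{b,-d}\delta_{a,c-2d}$ is not symmetric in that way, so I would instead be careful that one of the two factors is $[C_\W]$ evaluated with its index pair in the opposite order. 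Concretely, with the correct bookkeeping the product becomes a Gauss-type sum $\tfrac{1}{n}\sum_f z^{f(b - (\text{something}))}$ over a single free index $f$, which evaluates to $n$ when the exponent vanishes identically and $0$ otherwise; this is exactly what produces the two clean Kronecker deltas $\delta_{b,-d}$ and $\delta_{a,c-2d}$ in the final answer, with the factor $1/n$ cancelling the prefactor $(1/\sqrt n)^2$ and the range-size $n$.

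The main obstacle I anticipate is purely one of index discipline rather than depth: getting the order of the index pairs in $[C_\W]$ and $[C_\W^T]$ right, correctly identifying which index is summed, and making sure the phase $z^{(\cdot)}$ in each factor has the right sign so that the resulting exponent in the Gauss sum is a nontrivial linear function of the summation variable (giving the delta) plus the residual linear constraints on $a,b,c,d$. I would do the whole calculation once symbolically and then sanity-check it for $n=2$ against the earlier Pauli computation — there the Weyl matrices $U^aV^b$ are (up to phase) the Pauli matrices, and the lemma should reproduce, after the scaling $\tfrac{1}{\sqrt n}$ is accounted for, the matrix $\tfrac{1}{2}\cdot 2\cdot(\text{transpose map})$ found in the Pauli theorem, i.e.\ the conclusion that $C_\W C_\W^T$ is (a scalar multiple of) the transpose map when $n=2$. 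Once the two displayed formulas are verified, the proof is complete; no further ideas are needed.
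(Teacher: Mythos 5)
Your proposal follows essentially the same route as the paper: expand $U^cV^d=\sum_j z^{jd}E_{j+c,j}$, read off $[C_\W]_{ab,cd}=\tfrac{1}{\sqrt n}z^{db}\delta_{b+c,a}$ (the $\tfrac{1}{\sqrt n}$ you worried about is indeed present in the paper's own displayed computation, so the lemma statement merely omits the normalization, which cancels in the product since $(1/\sqrt n)^2\cdot n=1$), and then evaluate $\sum_{j,k}[C_\W]_{ab,jk}[C_\W]_{cd,jk}$ via the constraint $j=a-b=c-d$ and the geometric sum $\sum_k z^{k(b+d)}=n\,\delta_{b+d,0}$. Your passing claim that the two deltas force $a=c$ is exactly the index slip you yourself flag --- they actually read $\delta_{a,b+j}$ and $\delta_{c,d+j}$ and force $a-b=c-d$ --- and once that is corrected the calculation closes precisely as in the paper.
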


\begin{proof} We have $U^aV^b e_j =  z^{bj} E_{j+a,j} e_j$  so
$$U^aV^b = \sum_j z^{bj} E_{j+a,j}.$$
Thus
\begin{align}\label{Tab}
[C_\W ]_{ab,cd} &= \< C_\W (E_{cd}),E_{ab}\> \cr
&= {1\over \sqrt{n}}\<U^cV^d, E_{ab}\> \cr
&= {1\over \sqrt{n}}\<\sum_j z^{dj}E_{j+c,j}, E_{ab}\> \cr
&= {1\over \sqrt{n}}\sum_j z^{dj} \delta_{j+c,a}\delta_{j,b} \cr
&={1\over \sqrt{n}}z^{db} \delta_{b+c,a}
\end{align}
Now
$$(C_\W C_\W ^T)_{ab, cd} = \sum_{jk} [C_\W ]_{ab, jk}[C_\W ^T]_{jk, cd}=\sum_{jk} [C_\W ]_{ab, jk}[C_\W ]_{cd, jk}.$$
In the first factor $[C_\W ]_{ab, jk}$ of the last sum we use the expression \eqref{Tab} for $[C_\W ]_{ab,cd}$ with the substitutions $c\to j$ and $d\to k$.  In the second factor $[C_\W]_{cd, jk}$ we use \eqref{Tab} with the substitutions $a\to c, b\to d, c\to j, d\to k$. We get
$$(C_\W C_\W^T)_{ab, cd}= {1\over n}\sum_{jk} z^{kb} \delta_{b+j,a}z^{kd} \delta_{d+j,c}.$$
The summands will be nonzero if and only if $j = a-b = c-d$ (mod $n$). Thus
$$(C_\W C_\W^T)_{ab, cd}= {1\over n}\delta_{a-b,c-d}\sum_{k}  z^{k(b+d)}.$$
The sum will be zero unless $b+d = 0$, in which case it has the value $n$. Thus
$$(C_\W C_\W^T)_{ab, cd}= \delta_{a-b,c-d}\delta_{b+d,0} =\delta_{b,-d}\delta_{a,c-2d}.$$
\end{proof}

Note that Lemma~\ref{L1} gives 
\begin{equation} \label{key}
(C_\W C_\W^T)(E_{c,d}) = E_{c-2d,-d}\, ,
\end{equation}
 so in particular $C_\W C_\W^T$ acts as a permutation on the basis of matrix units.

\begin{corollary}  For the Weyl basis $\W$,  the duality map $\D_\W$ is a complete order isomorphism if $n = 2$, and is not an order isomorphism for $n > 2$.
\end{corollary}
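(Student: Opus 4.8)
The plan is to feed the permutation described in \eqref{key} into Theorem~\ref{thm1}. By definition $M_\W = C_\W C_\W^T$, so \eqref{key} says that $M_\W$ is the linear map permuting the matrix units by $E_{c,d}\mapsto E_{c-2d,\,-d}$. When $n=2$ all indices live in $\Z_2$, so $c-2d\equiv c$ and $-d\equiv d$; thus $M_\W$ is the identity map, which is $\Phi_I$, and Theorem~\ref{thm1}(1) gives that $\D_\W$ is a complete order isomorphism. (Equivalently, Theorem~\ref{main} gives $\D_\W=\D_\E\circ M_\W^{-1}=\D_\E$, which is a complete order isomorphism by Theorem~\ref{PTT}.)

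For $n>2$ I would show that $M_\W$ is neither $\Phi_C$ nor $t\circ\Phi_C$ for any $C\in M_n$; Theorem~\ref{thm1} then yields that $\D_\W$ is not an order isomorphism. Since $M_\W$ permutes a basis it is invertible, so any such $C$ is automatically invertible. Write $E_{c,d}=R_{e_c,e_d}$ and recall $\Phi_C(R_{x,y})=R_{Cx,Cy}$. If $M_\W=\Phi_C$, then $R_{Ce_c,Ce_d}=R_{e_{c-2d},\,e_{-d}}$ for all $c,d\in\Z_n$, and comparing ranges forces $Ce_c\in\C e_{c-2d}$ for every $c$ and every $d$. Taking $c=0$ and letting $d$ vary, the index $-2d$ runs over the set $2\Z_n$, which contains a nonzero element exactly when $n>2$; so $Ce_0$ lies on two distinct coordinate lines and must vanish, contradicting invertibility of $C$. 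For the co-positive alternative I would use the identity $R_{x,y}^t=R_{\bar y,\bar x}$, so that $t\circ\Phi_C$ sends $E_{c,d}$ to $R_{\overline{Ce_d},\,\overline{Ce_c}}$; then $M_\W=t\circ\Phi_C$ forces $\overline{Ce_d}\in\C e_{c-2d}$ for all $c,d$, and fixing $d=0$ while varying $c$ puts $Ce_0$ on every coordinate line, so again $Ce_0=0$, a contradiction. Hence for $n>2$ neither case of Theorem~\ref{thm1} can hold.

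I do not expect a genuine obstacle here, since the computational content has already been extracted in Lemma~\ref{L1} and \eqref{key}. The only points needing care are recognizing that the dichotomy is governed exactly by whether $2\equiv 0\pmod n$ (that is, $n=2$ versus $n>2$), and the elementary range-matching observation that a nonzero rank-one operator $R_{u,v}$ has range $\C u$ — the same device that drives the proof of Proposition~\ref{rank one}. Invertibility of $C$ (equivalently of $M_\W$) is what converts the resulting line-inclusion constraints into an outright contradiction.
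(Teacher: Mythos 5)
Your proposal is correct and follows essentially the same route as the paper: both feed the permutation formula \eqref{key} into Theorem~\ref{thm1}, dispose of $n=2$ by noting $M_\W$ is the identity, and for $n>2$ rule out both $M_\W=\Phi_C$ and $M_\W=t\circ\Phi_C$ by the rank-one range-matching device $\Phi_C(R_{x,y})=R_{Cx,Cy}$, $R_{x,y}^t=R_{\bar y,\bar x}$. The only cosmetic difference is that the paper first pins down $Ce_d=\lambda_d e_{-d}$ from the diagonal units and then derives the congruence $2c\equiv 2d$, whereas you extract the contradiction directly from the line inclusions and invertibility of $C$; both are sound.
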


\begin{proof} If $n = 2$, then from \eqref{key} $C_\W C_\W^T$ is the identity map, and hence $\D_\W$ is a complete order isomorphism.

Now let $n > 2.$ Suppose first (to reach a contradiction) that $C_\W C_\W^T = \Phi_C$ for some invertible $C \in L(M_n)$. Then by \eqref{key},
$$E_{-d,-d} = (C_\W C_\W^T)(E_{dd})  = \Phi_C E_{dd} = R_{Ce_d, Ce_d}.$$
Thus for all $d$ there are scalars $\lambda_d$ of modulus one such that $Ce_d = \lambda_d e_{-d}$. Then 
$$E_{c-2d,-d} = (C_\W C_\W^T)(E_{cd}) =\Phi_C(E_{cd}) = R_{Ce_c, Ce_d} = \lambda_c\overline{\lambda}_d E_{-c,-d}.$$ This implies $c-2d = -c$, so $2c = 2d \bmod{n}$ for all $c, d$.   This is impossible for $n > 2$.

Now suppose $C_\W C_\W^T = t \circ \Phi_C$. Then again applying \eqref{key},
$$E_{-d,-d} = (C_\W C_\W^T)(E_{d,d}) = (\Phi_C(E_{d,d}))^t=  (R_{Ce_d, Ce_d})^t = (R_{\overline{Ce_d}, \overline{Ce_d}}).$$
This implies that for all $d$, $\overline{Ce_d}$ is a multiple of $e_{-d}$, and hence $Ce_d$ is a multiple of $\overline{e_{-d}} = e_{-d}$. As above
$$E_{c-2d,-d}=(C_\W C_\W^T)(E_{cd}) = (R_{Ce_c, Ce_d})^t = R_{\overline{Ce_d}, \overline{Ce_c}} = \lambda_c\overline{\lambda}_d E_{-c,-d}.$$
This implies $c-2d = -c$ for all $c, d$, which again is impossible for $n > 2$.
\end{proof}

\begin{remark} The Weyl basis for $n = 2$ is slightly different than the Pauli spin basis. Indeed, one has
$${1\over \sqrt{2}}I = {1\over \sqrt{2}}\begin{pmatrix}1&0\cr 0&1\end{pmatrix}, \quad
{1\over \sqrt{2}}U = {1\over \sqrt{2}}\begin{pmatrix}0&1\cr 1&0\end{pmatrix},$$
$$ 
{1\over \sqrt{2}}V = {1\over \sqrt{2}}\begin{pmatrix}1&0\cr 0&-1\end{pmatrix}, \quad
{1\over \sqrt{2}}UV = {1\over \sqrt{2}}\begin{pmatrix}0&-1\cr 1&0\end{pmatrix}$$
These are the Pauli spin matrices except for normalization and a missing factor of $i$ in the last. Again, applying Corollary~\ref{genlchoi} yields the usual Choi condition.
   
Recall that for the Pauli spin matrices we found $C_\B C_\B^T$ was the transpose map, so $\D_\B$ in that case was co-positive.
\end{remark}

\begin{remark} Note that for $M_{2^n}$ if we take tensors of the
  Weyl basis for $M_2 $, then we will get another basis for which the duality map is a complete order isomorphism.
\end{remark}

\section*{The Conjugate Linear Duality Map}

Using the fact that $M_n$ is a Hilbert space, we also have a canonical conjugate linear isomorphism between $M_n$ and its dual space. This map is unaffected by whether we make the inner product conjugate linear in the first or second variable, so we use the physics convention that inner products are conjugate linear in the first variable. Thus, the inner product on $M_n$ can be given by
\[ \langle A,B \rangle = Tr(A^*B)\]
and the conjugate linear Hilbert space duality map is given by
\[ \D_d:M_n \to M_n^d  \text{ where } \D_d(A)(B) = Tr(A^*B).\]
The inverse of this map
\[ \Gamma_d = \D_d^{-1}: M_n^d \to M_n\]
sends the linear functional $f_A(B) = Tr(A^*B)$ to the matrix $A$ which is the adjoint of the density matrix. 

\begin{proposition} The duality maps $\D_d$ and $\Gamma_d$ are conjugate linear complete order isomorphisms.
\end{proposition}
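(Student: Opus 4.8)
The plan is to factor the conjugate linear duality map through the linear duality map $\D_\E$ and the entrywise conjugation map, and then quote Theorem~\ref{PTT}. First I would extend the relevant notions to conjugate linear maps in the obvious way: for a conjugate linear $\Phi$ between matrix ordered spaces the amplification still acts entrywise, $\Phi^{(p)}\big((v_{ij})\big) = \big(\Phi(v_{ij})\big)$, and with this convention ``completely positive'' and ``complete order isomorphism'' are defined exactly as in the linear case. I would then observe that the entrywise conjugation map $\kappa\colon M_n \to M_n$, $\kappa(A) = \overline{A}$, is a conjugate linear complete order isomorphism: it is an involution, so $\kappa^{-1} = \kappa$, and for $(A_{ij}) \in M_p(M_n)^+$ the matrix $\kappa^{(p)}\big((A_{ij})\big) = \big(\overline{A_{ij}}\big)$ is just the entrywise complex conjugate of the positive semidefinite $pn \times pn$ matrix $(A_{ij})$, hence positive semidefinite, since $\xi^* \overline{M}\, \xi = \overline{\overline{\xi}^{\,*} M\, \overline{\xi}} \ge 0$ whenever $M \ge 0$.

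Next I would identify the composition $\Gamma_\E \circ \D_d$. For $A \in M_n$ the functional $\D_d(A)$ is characterized by $\D_d(A)(X) = \tr(A^* X)$ for all $X \in M_n$, while \eqref{eq:density matrix} gives $\D_d(A)(X) = \tr\big(\Gamma_\E(\D_d(A))^t X\big)$ for all $X$; comparing the two and using nondegeneracy of the trace pairing yields $\Gamma_\E(\D_d(A))^t = A^*$, that is, $\Gamma_\E(\D_d(A)) = (A^*)^t = \overline{A} = \kappa(A)$. Hence $\Gamma_\E \circ \D_d = \kappa$, so that $\D_d = \D_\E \circ \kappa$ and, taking inverses, $\Gamma_d = \D_d^{-1} = \kappa^{-1}\circ \D_\E^{-1} = \kappa \circ \Gamma_\E$.

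Finally I would invoke Theorem~\ref{PTT}, by which $\D_\E$ and $\Gamma_\E$ are (linear) complete order isomorphisms. Since all amplifications act entrywise we get $\D_d^{(p)} = \D_\E^{(p)} \circ \kappa^{(p)}$ and $\Gamma_d^{(p)} = \kappa^{(p)} \circ \Gamma_\E^{(p)}$ for every $p$, and a composition of positive maps is positive; hence $\D_d$ and $\Gamma_d$ are completely positive, and being mutual inverses they are conjugate linear complete order isomorphisms. I do not anticipate a real obstacle: the one point requiring care is the bookkeeping of transpose, adjoint and complex conjugate in the identity $\Gamma_\E \circ \D_d = \kappa$, together with fixing the entrywise convention for amplifying a conjugate linear map — note that with the alternative convention $\Phi^{(p)}\big((v_{ij})\big) = \big(\Phi(v_{ji})\big)$ the map $\kappa$ would induce a block transpose and fail to be completely positive, so the entrywise convention is the one to use.
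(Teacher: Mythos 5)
Your proposal is correct and follows essentially the same route as the paper: the paper also writes $\Gamma_d$ as entrywise conjugation composed with the map sending a functional to the transpose of its density matrix (i.e.\ $\Gamma_\E$), invokes Theorem~\ref{PTT} for the latter, and uses that a matrix is positive iff its entrywise conjugate is positive. Your explicit verification that $\Gamma_\E \circ \D_d = \kappa$ and your remark on the entrywise convention for amplifying a conjugate linear map are just slightly more detailed bookkeeping of the same argument.
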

\begin{proof} By Corollary~\ref{densitymatrixcor} the linear map that sends a functional to its density matrix is a co-positive order isomorphism. Hence, the map that sends a functional to the transpose of its density matrix is a complete order isomorphism.  But a matrix $( c_{i,j})$ is positive if and only if the matrix $(\overline{c_{i,j}})$ is positive. Thus, the duality map $\Gamma_d$, which sends a functional to the adjoint of its density matrix is a complete order isomorphism. Consequently, so is its inverse $\D_d.$
\end{proof}

The above result has a nice interpretation in terms of bases.  Choi's characterization says that a map $\Phi:M_n \to M_p$ is completely positive iff the matrix $C_\Phi$ defined in \eqref{Choimatrix} is positive.  As observed in Example \ref{onbasis}, in the definition of $C_\Phi$, the basis $\{E_{ij}\}$ can't  be replaced by an arbitrary orthonormal basis.  The following result provides an alternate description of the Choi matrix that does have this independence property.
Given a matrix $B= (b_{i,j})$ we set $\overline{B} = ( \overline{b_{i,j}}).$ 

\begin{proposition}\label{prop22} Let $\{ B_l \}_{l=1}^{n^2}$ be an orthonormal basis for $M_n.$ A complex linear map $\Phi: M_n \to M_p$ is completely positive if and only if
\begin{equation}\label{choi2}
\sum_{l=1}^{n^2} \overline{B_l} \otimes \Phi(B_l)   
\end{equation}
is a positive $np \times np$ matrix. The matrix in \eqref{choi2} is  independent of the choice of orthonormal basis, and equals the  Choi matrix $C_\Phi$.
\end{proposition}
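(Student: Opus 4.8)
The plan is to deduce the proposition from Choi's theorem by establishing two facts: first, that the matrix in \eqref{choi2} does not depend on which orthonormal basis $\{B_l\}$ of $M_n$ is used; and second, that for the orthonormal basis $\E = \{E_{ij}\}$ of matrix units it equals the Choi matrix $C_\Phi$. Granting both, $\sum_l \overline{B_l}\otimes\Phi(B_l) = C_\Phi$ for every orthonormal basis, and then Choi's theorem \cite{Choi}, which asserts that $C_\Phi \ge 0$ if and only if $\Phi$ is completely positive, finishes the proof.

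The second fact is immediate, and I would dispose of it first: the entries of $E_{ij}$ are $0$ or $1$, hence real, so $\overline{E_{ij}} = E_{ij}$ and $\sum_{ij}\overline{E_{ij}}\otimes\Phi(E_{ij}) = \sum_{ij} E_{ij}\otimes\Phi(E_{ij}) = C_\Phi$; moreover $\E$ is orthonormal for the Hilbert--Schmidt inner product, so it is a legitimate instance of the general statement. For the first fact, I would take two orthonormal bases $\{B_l\}$, $\{B'_l\}$ and write $B'_l = \sum_m u_{lm}B_m$; comparing inner products shows that $U = (u_{lm})$ must be unitary. Expanding $\overline{B'_l}$ using the conjugate-linearity of entrywise conjugation and $\Phi(B'_l)$ using the complex-linearity of $\Phi$, the coefficient of $\overline{B_m}\otimes\Phi(B_k)$ in $\sum_l \overline{B'_l}\otimes\Phi(B'_l)$ is $\sum_l \overline{u_{lm}}\,u_{lk} = (U^*U)_{mk} = \delta_{mk}$, so the sum collapses to $\sum_m \overline{B_m}\otimes\Phi(B_m)$. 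This is the asserted basis-independence, and combined with the previous paragraph it shows the matrix in \eqref{choi2} is always $C_\Phi$.

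There is no serious obstacle; the only delicate point is the bookkeeping in the last computation — in particular, one must apply conjugate-linearity of the bar to the coefficients $u_{lm}$ and linearity of $\Phi$ to the coefficients $u_{lk}$, so that the contraction produces the $(m,k)$ entry of $U^*U$ rather than of $UU^*$ or its complex conjugate. It is worth stressing that the complex-linearity hypothesis on $\Phi$ is exactly what makes the change of basis close up: if $\Phi$ were conjugate-linear the same computation would fail. Finally, I would remark that replacing $\overline{B_l}$ by $B_l^*$ in \eqref{choi2} yields instead the Jamio\l kowski matrix $\J(\Phi)$, which is basis-independent for the same reason, and that \eqref{choi2} is the image of $\J(\Phi)$ under transposition in the first tensor leg — consistent with the remark before the proposition that our correspondence is closer to Choi's.
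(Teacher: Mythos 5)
Your proof is correct, but it takes a different route from the paper's. The paper derives the proposition from its duality machinery: it first shows that the conjugate-linear map $\Gamma_d$ (sending a functional $f_A(B)=\tr(A^*B)$ to $A$) is a complete order isomorphism, expands $\Gamma_d(f)=\sum_l \overline{f(B_l)}\,B_l$ in the orthonormal basis, and identifies $\bigl(\Gamma_d(f_{ij})\bigr)=\sum_l \overline{\Phi(B_l)}\otimes B_l$ where $\Phi(A)=(f_{ij}(A))$; positivity of this matrix is then equivalent to complete positivity of $\Phi$ by the order-isomorphism property, and basis-independence is automatic because $(f_{ij})$ is determined by $\Phi$ alone. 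You instead prove basis-independence by a direct unitary change-of-basis computation --- the contraction $\sum_l \overline{u_{lm}}u_{lk}=(U^*U)_{mk}=\delta_{mk}$, which is correct, with the conjugate-linearity of entrywise conjugation and the complex-linearity of $\Phi$ placed exactly where they need to be --- and then reduce to Choi's theorem via $\overline{E_{ij}}=E_{ij}$. Your argument is more elementary and self-contained (it does not need the preceding proposition on $\Gamma_d$, only Choi's theorem), while the paper's argument situates the result inside the duality framework that organizes the whole note; both are complete proofs. Your closing observation that \eqref{choi2} is the partial transpose of $\J(\Phi)$ in the first tensor factor matches the remark the paper makes immediately after the proposition.
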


\begin{proof} Let $f_A \in M_n^d$ be given by $f_A(B) = Tr(A^*B),$ so that 
\[\Gamma_d(f_A) = A= \sum_{l=1}^{n^2} \langle B_l, A \rangle B_l = \sum_{l=1}^{n^2} \overline{f_A(B_l)} B_l.\] Thus, with respect to this basis
\[ \Gamma_d(f) = \sum_{l=1}^{n^2} \overline{f(B_l)} B_l \quad \text{for all $f \in M_n^d$}.\]

Let $(f_{ij}) \in   M_n^d$ be the matrix defined by $\Phi(A) = (f_{ij}(A))$ for $A \in M_n$. Recall that by definition, the matrix $(f_{ij})$ is positive iff $\Phi$ is completely positive. Using the fact that $\Gamma_d$ is a complete order isomorphism, we have that $\Phi$ is completely positive if and only if
\begin{equation}\label{eqz} (\Gamma_d(f_{i,j})) = \sum_{l=1}^{n^2} \big( \overline{f_{i,j}(B_l)} B_l \big) = \sum_{l=1}^{n^2} \overline{\Phi(B_l)} \otimes B_l 
\end{equation}
is a positive $np \times np$ matrix. Using that fact that a matrix is positive if and only if its complex conjugate matrix is positive, the proposition now follows by applying the *-isomorphism  that takes $A \otimes B$ to $B \otimes A$.

Finally, since  the matrix $(f_{ij}) \in M_n^d$ is determined by $\Phi$,  the matrix $\sum_l \overline{B_l} \otimes \Phi(B_l)$ is independent of the choice of orthonormal basis $\{B_l\}$. For the standard basis $\{E_{ij}\}$ the matrix \eqref{choi2} is just the Choi matrix, and hence the matrix in \eqref{choi2} equals the Choi matrix   for all orthonormal bases $\{B_l\}$.
\end{proof}

 Note that the matrix in \eqref{choi2} is the partial transpose of the matrix $J(\Phi)$ defined by Jamio\l kowski, cf. \eqref{Jam}.


\subsection*{Acknowledgements}
The research on this paper was begun during the program {\it Operator
  structures in quantum information theory} at the Banff International
Research Station.

\end{document}